\newcommand{\bs}[1]{\boldsymbol{#1}}
\newtheorem{lemma}{Lemma}
\title{Structure-Informed  Neural Networks for Boundary Observation Problems}
\author{Jakub Horsky, Andrew Wynn }
\date{April 2023}
\begin{document}

\maketitle

\begin{abstract}
    We introduce \textit{Structure Informed Neural Networks} (SINNs), a novel method for solving boundary observation problems involving PDEs. The SINN methodology is a data-driven framework for creating approximate solutions to internal variables on the interior of a domain, given only boundary data. The key idea is to use neural networks to identify a co-ordinate transformation to a latent space, upon which a well-posed elliptic system of partial differential equations is constructed. The use of elliptic systems enables the low-cost transfer of information from the domain's boundary to its interior. This enables approximate solutions to PDE boundary observation problems to be constructed for generic, and even ill-posed, problems. A further advantage of the proposed method is its ability to be trained on experimental or numerical data without any knowledge of the underlying PDE. We demonstrate the ability of SINNs to accurately solve boundary observation problems by considering two challenging examples of a non-linear heat equation and boundary observation for the  Navier-Stokes equations. 

\textit{Key words:} Data driven scientific computing, Reduced order modeling, Machine learning, Partial differential equations, Operator learning
\end{abstract}

\section{Introduction}

Boundary observation problems aim to discover the value of a physical quantity inside a domain by using only observations from its boundary. If possible, this means that potentially complex physical information can be obtained without the need for invasive internal sensors. Many fundamental problems in engineering and physics have this form with applications, for example, in  fluid mechanics \cite{illingworth_2018}, medical imaging \cite{song_2022}, geophysics \cite{snieder_1988}, and thermal sensing \cite{bryan_2023}.  

Typically, the internal physical quantity of interest is linked to the boundary observations by a  partial differential equation (PDE). In many applications, this can make the problem highly challenging to analyse analytically and computationally impractical to solve numerically. In fluid mechanics, for example, the nonlinear Navier-Stokes equations govern the relation between the internal fluid properties, such as its velocity or temperature, and boundary data which are convenient to observe experimentally, such as pressure or shear stress. The well-known complexity of solutions to such nonlinear PDEs implies that solving boundary observation problems of practical importance is a significant challenge. 

In this paper, we propose a new data-driven methodology, called {\em Structure Informed Neural Networks} (SINNs), for solving boundary observation problems involving nonlinear PDEs. The idea is to embed an inherently well-posed structure for boundary observation problems into a data-driven framework with the aim to enable efficient, low-order, approximate solutions. This is achieved in a three-stage process, indicated schematically in Figure \ref{fig:mapping_schematic}. First, a neural network encodes both the boundary data and the structure of the boundary geometry into a simpler {\em latent space} of boundary variables. Information is then passed from the boundary to the interior of the latent space using an {\em elliptic system} \cite{giaquinta_2012}. This embeds a general class of well-posed PDEs into the SINN. Finally, a second neural network is used to decode the interior latent to physical variables. 

The idea of using elliptic systems in a data-driven approach is the main novelty of this paper. Boundary value problems for elliptic systems were widely studied in the ``golden age'' of PDE analysis in the 1950s \cite{morrey_1957}. Our motivation for using them now is that they can describe a significant range of boundary value problems, are numerically tractable to solve, and can be defined with only small number of parameters.  The second major contribution of this paper is to  develop an operator-theoretic framework for embedding elliptic systems within the classical encoder-decoder structure of neural network-based reduced order modelling. This underpins the efficient numerical identification of SINNs, enables a powerful coupling of elliptic systems with deep neural networks, and opens the door to the data-driven solution of a wide range of challenging nonlinear boundary observation problems. 

The structure-informed neural networks (SINNs) developed here have some similarities, and take inspiration from, a number of existing data-driven methods for PDE analysis. For example, Koopman-based modal decomposition methods \cite{bevenda_2021,schmid_2022,Lusch_2018} possess the same three-stage mapping structure as in Figure \ref{fig:mapping_schematic};  Physics-Inspired Neural Networks (PINNs) \cite{raissi_2019} use neural networks to efficiently solve PDEs, including boundary value problems \cite{li_2023}; and Neural Operators \cite{Kovachki_2023} use a kernel-based Neural Networks to construct solution operators for PDE parameter identification.  To enable a full discussion of the relation and distinction between SINNs and existing methods in \S \ref{sec:SINNS_vs_others} we must first define the mathematical structure of the boundary observations problems we aim to solve and give an overview of the SINN methodology.

\begin{figure}
\centering
\includegraphics[width=0.95\textwidth,clip=true,trim=1cm 1cm 1cm 1cm]{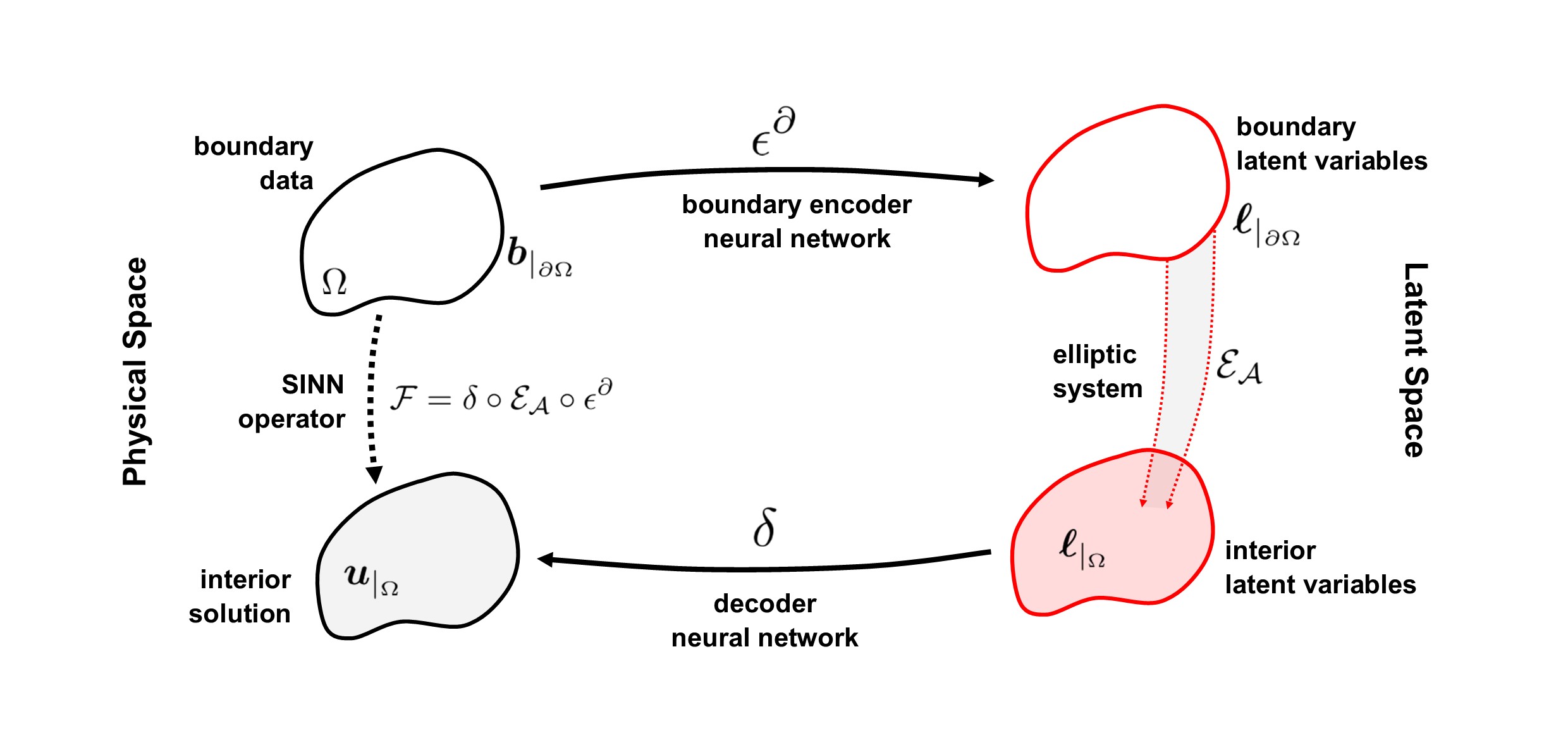}    
\caption{Schematic overview of a Structure-Informed Neural Network (SINN) operator $\mathcal{F} = \delta \circ \mathcal{L} \circ \epsilon^{\partial}$ for solving boundary observation problems.}
\label{fig:mapping_schematic}
\end{figure}

\subsection{Boundary observation problems}
Consider a physical domain $\Omega \subset \mathbb{R}^d$, with $d=2$ or $3$, and let $\partial \Omega$ denote its boundary. At each point $\bs{x} \in \Omega$, we want to recover the value of an $n$-dimensional physical variable $\bs{u}(\bs{x}) \in \mathbb{R}^n$. To do this, we can only make use of boundary data $\bs{b}(\bs{z}) \in \mathbb{R}^{n_\partial}$ which can be measured at each point $\bs{z} \in \partial \Omega$. It will be assumed that both interior and boundary data are square-integrable functions in the sense that $\bs{u} \in X$, where 
\[
X = L^2(\Omega,\mathbb{R}^n) = \left\{ f: \Omega \rightarrow \mathbb{R}^n : \int_\Omega \|f(\bs{x})\|^2 d\bs{x} < \infty \right\},
\]
and that $\bs{b} \in Y$, where 
\[
Y = L^2(\partial \Omega, \mathbb{R}^{n_\partial}) = \left\{ f: \partial\Omega \rightarrow \mathbb{R}^{n_\partial} : \int_{\partial \Omega
} \|f(\bs{z})\|^2 d\bs{z} < \infty \right\}.
\]

A typical situation in which such data arises is if the internal and boundary data satisfy a PDE of the form 
\begin{equation} \label{eq:PDE_abstract}
\begin{split}
\mathcal{L}(\bs{u},\bs{\lambda}) &= 0, \qquad \text{in} \; \Omega\\    
\mathcal{B}(\bs{u}) &= \bs{b}, \qquad  \text{on} \; \partial\Omega,  
\end{split}
\end{equation}
where $\mathcal{L}$ is a differential operator, $\bs{\lambda}$ are any parameters, and $\mathcal{B}$ is an output operator linking the interior to boundary variables. We do not assume that \eqref{eq:PDE_abstract} is a well-posed in the sense that for any boundary function $\bs{b}$, there is a unique solution $\bs{u}$ satisfying the PDE. Instead, the output operator $\mathcal{B}$ should be viewed simply as shorthand for the ``available information'' which may be observed on the boundary, given that a physical system is in state $\bs{u}$ inside the domain.

In this abstract language, the structure-informed neural networks (SINNs) that will be constructed in this paper are  operators   
\begin{equation} \label{eq:SINN_abstract}
\begin{split}
\mathcal{F} :& Y \rightarrow X \\
& \bs{b} \mapsto \bs{u}
\end{split}
\end{equation}
acting between the function space $Y$ of observable measurements and the space $X$ of all possible distributions of interior physical values. The fact we will identify {\em operators} is important since it means that a single SINN  $\mathcal{F}$ is able to  approximate the internal variables $\bs{u}$, given any possible boundary observation $\bs{b} \in X$. As will be discussed in \S \ref{sec:SINNS_vs_others}, this operator-based philosophy places SINNs within the recent class of neural-network-based operator identification methods such as Neural Operators \cite{Kovachki_2023} or DeepONets \cite{lu_2021}. 

A key objective of this paper is to identify SINN operators from data. We  assume that a data ensemble 
\[
\mathcal{U}:=\{ (\bs{u}_i , \bs{b}_i) \}_{i=1}^{N_T} \subset (X \times Y)^{N_T}
\]
is available consisting of $N_T$ pairs of internal and boundary data, each arising from a solution to \eqref{eq:PDE_abstract}. The idea will be to construct a mapping of the form \eqref{eq:SINN_abstract} which optimally fits the data $\mathcal{U}$. Given the infinite-dimensional nature of the underlying problem and the finite-dimensional nature of the data $\mathcal{U}$, however, any numerically tractable method must make an a-priori restriction on the possible forms that the $\mathcal{F}$ can take.

\subsection{SINN operators} \label{sec:intro_SINN_operators}

We assume that $\mathcal{F}$ is the composition of three operators
\begin{equation} \label{eq:F_composition}
\mathcal{F} = \delta \circ \mathcal{E}_\mathcal{A} \circ \epsilon^{\partial},
\end{equation}
the structure of which is shown schematically in Figure \ref{fig:mapping_schematic}. The first operator is called a {\em boundary encoder} $\epsilon^\partial : Y \rightarrow Y_L$. This is a nonlinear operator, defined in terms of a neural network, which maps both the boundary data and geometry into a boundary latent space, $Y_L :=L^2(\partial \Omega, \mathbb{R}^r)$, where the parameter $r$ governs the order and complexity of the latent space. 

To enable data-driven training we must further restrict the form of the operator $\epsilon^\partial$, and we assume that $\epsilon^\partial$ acts {\em semi-locally} in the following sense. Given boundary data $\bs{b} \in Y$, the value of $(\epsilon^\partial \bs{b})(\bs{z})$ at any $\bs{z} \in \partial \Omega$ can only depend on the values of $\bs{b}$ in a small neighbourhood $N_{\bs{z}} \subset \partial \Omega$ of $\bs{z}$. Practically, this will be achieved by training a neural network\footnote{A formal mathematical definition of neural networks used in this paper is given in \S\ref{sec:cost_min}.} $\mathcal{N}^\partial: \left\{ \bs{b}(\bs{\xi}): \bs{\xi} \in N_{\bs{z}} \right\} \mapsto (\epsilon^\partial \bs{b})(\bs{z})$ As will be described in detail in \S\ref{sec:bdry_encoder}, the fact that the input to $\mathcal{N}_\partial$ is defined in terms of a local neighbourhood will enable the use of a single neural network to be repeatably to build up the definition $\epsilon^\partial : Y \rightarrow Y_L$. This enables a wide class of nonlinear operators to be considered without significantly increasing the number of optimisation parameters.

The purpose of introducing latent variables is to define a common structure within which  information can be passed from the boundary latent space $Y_L$ to an interior latent space $X_L=L^2(\Omega,\mathbb{R}^r)$. A SINN implements this transfer of information by using an {\em elliptic system} of PDEs. An elliptic system is governed by a second-order differential operator
\[
D_\mathcal{A} \bs{\ell} = \sum_{i,j=1}^d A_{ij} \frac{\partial^2 \bs{\ell}}{\partial x_i \partial x_j},
\]
where $A_{ij} \in \mathbb{R}^{r \times r}$ are symmetric matrices satisfying the two conditions: i) that $A_{ij}=A_{ji}$, for any $i,j=1,\dots,r$; and ii) that the block matrix $\mathcal{A} = (A_{ij}) \in \mathbb{R}^{rd \times rd}$ is strictly positive definite.

Information is passed from the boundary latent space $Y_L$ to an interior latent space $X_L$ by solving the following boundary problem:
\begin{equation} \label{eq:elliptic_system}
\begin{split}
D_\mathcal{A} \bs{\ell} &= 0, \phantom{(b)^\partial}  \qquad \text{in}\; \Omega\\
\bs{\ell}  &= \epsilon^\partial( \bs{b}), \qquad \text{on} \; \partial \Omega.
\end{split}
\end{equation}
The assumption that $\mathcal{A}$ is positive definite is crucial. This implies that \eqref{eq:elliptic_system} is a {\em strongly elliptic system} of PDEs. It then follows, under appropriate smoothness conditions \cite{giaquinta_2012} on the latent boundary data $\epsilon^{\partial}(\bs{b})$ and the boundary geometry,  that  \eqref{eq:elliptic_system} has a unique solution $\bs{\ell} \in X_L$. We let $\mathcal{E}_{\mathcal{A}} : Y_L \rightarrow X_L$ denote the operator which maps boundary data to internal variables when solving the elliptic boundary value problem \eqref{eq:elliptic_system}.  The structure of the SINN mapping \eqref{eq:F_composition} is hence specifically designed to create a latent space in which passage of data from boundary to the interior is well-posed. This is achieved irrespective of the properties of the  PDE or the observation mapping structure \eqref{eq:PDE_abstract} from which the physical data was sampled.  

This third, and final, component of a SINN operator \eqref{eq:F_composition} is a {\em decoder}
\begin{equation} \label{eq:decoder_abstract}
\begin{split}
\delta &: X_L \longrightarrow X \\
& \quad \;\;\; \bs{\ell} \longmapsto \bs{u}
\end{split}
\end{equation}
which lifts a distribution of interior latent $\bs{\ell} \in X_L$ back into physical space $\bs{u} \in X$. Analogous to the boundary encoder, $\delta$ is assumed to be nonlinear and semi-local. That is, for any $\bs{x} \in \Omega$, the value of $(\delta \bs{\ell})(\bs{x})$ must only depend on the values of $\bs{\ell}$ in a small neighbourhood $N_{\bs{x}} \subset \Omega$ of $\bs{x}$. Again, this can be implemented using a single neural network $\mathcal{N} : \{ \bs{\ell}(\bs{y}) : \bs{y} \in N_{\bs{x}} \} \mapsto \bs{\ell}(\bs{x})$ which is applied repeatably to form the definition of the operator $\delta$, as described in detail in \S\ref{sec:decoder}.

In summary, a structure-informed neural network (SINN) $\mathcal{F}$ is an operator of the following form 
\[
\mathcal{F} = \left\{ \footnotesize \begin{array}{c} \text{semi-local} \\ \text{nonlinear NN} \\  \delta : X_L \rightarrow X \end{array} \right\} \circ \left\{ \footnotesize \begin{array}{c} \text{global} \\  \text{elliptic system} \\ \mathcal{E}_{\mathcal{A}}: Y_L \rightarrow X_L \end{array} \right\} \circ \left\{ \footnotesize \begin{array}{c} \text{semi-local} \\ \text{nonlinear NN} \\  \epsilon^\partial : Y \rightarrow Y_L \end{array} \right\} 
\]
The semi-local architecture of the encoder and decoder mappings is chosen specifically so as to restruct the number of degrees of freedom involved in defining the nonlinear components of the operator. The global transfer of information from boundary to interior is performed in the latent space via a well-posed elliptic system. This embeds a natural, yet very general, object into a SINN which is specifically tailored to the structure of the boundary observation problems that are our aim to solve. Furthermore, as will be described in \S\ref{sec:training}, a key advantage of using elliptic systems of PDEs is that identification of their coefficients can be performed in a computationally-efficient manner using only local training data. However, once trained, the resulting elliptic system can then be applied globally to give a SINN solution to the original boundary observation problem.    

In \S \ref{sec:generating_functions} we introduce the concept of a {\em generating function} which underpins the semi-local structure of the encoder and decoder operators, before introducing these operator formally and deriving their inherited mathematical properties. The method of training SINNs from data is described in \S \ref{sec:training} and its numerical implemention discussed in \S \ref{sec:Methodology}. Implementation of our approach on a pair of challenging test-cases is given in \S\ref{sec:num_egs}. Before this, we first comment briefly on the relation between the proposed SINN architecture and other, related, data-driven approaches to PDE analysis. 

\subsection{Relation of SINNs to existing methods} \label{sec:SINNS_vs_others}

The use of neural networks to solve PDEs has received much recent interest with the development of Physics-inspired Neural Networks (PINNs) \cite{raissi_2019}. In the context of solving a PDE of the form \eqref{eq:PDE_abstract}, the idea is to view the solution $\bs{u}$ as a mapping $\mathbb{R}^d \ni \bs{x} \mapsto \bs{u}(\bs{x}) \in \mathbb{R}^n$ and to therefore seek to construct a neural network $\mathcal{N}_P : \mathbb{R}^d \rightarrow \mathbb{R}^n$ which approximates the solution. The crucial step is to add so-called physics-inspired constraints, namely $\mathcal{L}(\mathcal{N}_P(\bs{x}),\bs{\lambda})_{|_\Omega}=0$ and $\left[\mathcal{B}(\mathcal{N}_P(\bs{x}))-\bs{b} \right]_{|_{\partial \Omega}}=0$, to force the constructed solution to satisfy the underlying PDE. 

In contrast to the SINN operators $\mathcal{F}:Y \rightarrow X$ which act between functions spaces, PINNs are finite-dimensional mappings that directly attempt to replicate the solution mapping $\bs{x} \mapsto \bs{u}(\bs{x})$. They require knowledge of the underlying PDE they seek to solve (i.e., of $\mathcal{L}, \bs{\lambda}$ and $\mathcal{B}$) and, when applied to boundary observation problems, must be trained using knowledge of the specific boundary data $\bs{b}$. In contrast, SINNs do not require such information: the identification of operators means that such boundary data is not required in the SINN methodology. 

The three-operator structure of the mapping $\mathcal{F} = \delta \circ \mathcal{E}_\mathcal{A} \circ \epsilon^\partial$ is widely used in a variety of data-driven approaches to low-order modelling. In Koopman-based modelling, for example, operators with this three-level structure are used to approximate the time-evolution of chaotic, infinite-dimensional, dynamical systems. In these approaches, the role of the central  operator $\mathcal{E}_\mathcal{A}$ is to model temporal evolution, rather than the passage of information from a domain's boundary to its interior as in this paper. The main distinction between the SINNs and the Koopman methodology is that, in the latter approach, the latent space is finite dimensional and the temporal operator a finite-dimensional ODE. 

This represents an important distinction with the SINN methodology. To explain, consider the case of the decoder $\delta$ operator, and assume that it maps from a finite dimensional latent space, say $\mathbb{R}^r$, into the infinite dimensional space of physical variables $X=L^2(\Omega,\mathbb{R}^n)$. The mismatch in dimensions between latent and physical space implies that the decoder must have an inherent method of translating finite-dimensional latent variables to infinite-dimensional functions. In Koopman-based approaches, this is typically achieved by considering a basis of functions $\{\Phi_i \}_{i=1}^N \subset X$ and letting $\delta$ involve a mapping from the latent space $\mathbb{R}^r$ to the coefficients $\{ \hat{f}_i\} \subset \mathbb{R}^N$ of a series expansion $\sum_{i=1}^N \hat{f}_i \Phi_i \in X$. A major challenge of this approach is the choice of an appropriate basis $\{ \Phi_i \}$ 
 and attempting to solve this problem has motivated a range of different Koopman-based methods \cite{wynn_2013,li_dmd_2017}. 
 
 In contrast, in the SINN approach developed here, the use of an elliptic system $\mathcal{E}_\mathcal{A}$ removes the need for assigning or identifying a set of basis functions, and therefore the imposition of unnecessary structure on the operator $\mathcal{F}$. Constructing an appropriate elliptic system only requires identifying the PDE coefficient matrix $\mathcal{A}$, which potentially offers a significant reduction in dimension to identifying a set of basis function $\{\Phi_i\} \subset X$. This advantage comes at the cost of requiring solution of a PDE, as opposed to an ODE, as the central component of the model $\mathcal{F}$. However, the SINN methodology deliberately imposes a well-posed elliptic structure which, in many cases, enables this PDE to be solved at accurately and at low cost using existing algorithms. In addition, as will be explained in \S\ref{sec:training}, since our aim is to identify a PDE, the cost function for SINN training can be chosen to involve only low-cost, local, solutions to elliptic systems during training. However, once trained, the identified elliptic systems can then be used to transfer information across a domain globally. 

Finally, the philosophy taken in this paper to identify {\em operators} using the SINN methodology is related to the recent interest in using neural networks to identify operators between function spaces, such as Neural Operators \cite{Kovachki_2023} or DeepONets \cite{lu_2021}. The Neural Operator framework \cite{Kovachki_2023} seeks to construct solution operators $G:\bs{\lambda} \mapsto \bs{u}$ which solve PDEs of the form \eqref{eq:PDE_abstract} with Dirichlet boundary conditions $\bs{b}=0$ using knowledge of their distributed parameters $\bs{\lambda}$. In this approach, $G$ transfers information globally in the domain $\Omega$ using an iterative sequence of integral operators whose kernels are identified using neural networks. For practicable computational implementation in model training, structure needs to be imposed in the integral kernels, such as using low-rank approximations, Convolutional Neural Networks, Graph Neural Networks \cite{pilva_2022_graph}, or Fourier Neural Operators \cite{li_2021_fourier}. Any such choice of structure is philosophically similar to need to prescribe a functional basis in the Koopman-based methodology described previously. Again,  the contrast to the SINN methodology is that by training an elliptic operators, only a relative small number of coefficients are required to enable global transfer of problem information, and this is achieved  without the need to impose any additional structure on the operator ansatz. We note, finally, that the DeepONet methodology \cite{lu_2021}, which can be viewed as a special case of the Neural Operator approach, also essentially requires the identification of a functional basis during training.

\section{Encoders and Decoders for SINNs} \label{sec:generating_functions}

In this section, we give a detailed description of the mathematical structure of the encoder and decoder operators required to create a SINN. We will describe three classes of operator:  interior encoders, boundary encoders, and decoders. As indicated in Figure \ref{fig:mapping_schematic}, only the boundary encoder and decoder are required to define a SINN mapping. However, as will be explained in \S \ref{sec:training}, interior encoders will be required to enable data-driven training. 

The semi-local structure of all encoder and decoder operators will be implemented by defining {\em generating functions} (GFs), which act as the building blocks of the SINN methodology. In each of the follow sections we first introduce a generating function, use it to define the respective operator, then comment on the regularity properties inherited by that operator. 

\subsection{Interior Encoders} 
 \label{sec:int_encoder}

 For the purposes of model training only, we will construct interior encoders $\epsilon$ which, given any distribution of physical variable $\bs{u} \in X$, transforms these into a distribution of latent variables $\bs{\ell} = \epsilon \bs{u}$ on the domain interior.\\ 

 \noindent
 {\em Interior Encoder GFs:} Given a compact set $0 \in E \subset \mathbb{R}^d$, a generating function for an interior encoder is any continuous, compact\footnote{A compact mapping is one which maps bounded subsets to relatively compact subsets.}, generally nonlinear mapping
\begin{equation} \label{eq:gf_int}
e : L^2(E,\mathbb{R}^n)  \longrightarrow \mathbb{R}^r.
\end{equation}
This should be thought of as a mapping 
\[
e : \left\{ \footnotesize \begin{array}{c} \text{Local patch of} \\ \text{ interior data} \end{array} \right\} \longmapsto \left\{ \footnotesize \begin{array}{c} \text{Latent} \\ \text{variables} \end{array} \right\}.
\]
which will be used to endow the interior encoder $\epsilon$ with the desired semi-local structure. \\

\noindent 
{\em Definition of Interior Encoders:} For any $\bs{x} \in \Omega$, define a local neighbourhood
\[
E_{\bs{x}} := \bs{x} + E =    \{ \bs{x} + \bs{y} : \bs{y} \in E\},
\]
and let $ \Omega_E:=\{ \bs{x} \in \Omega : E_{\bs{x}} \subset \Omega \}$ be the set of points whose neighbourhoods $E_{\bs{x}}$ are entirely contained in $\Omega$. These sets are shown in Figure \ref{fig:encoder}. 

Next let $\bs{u} \in X$. For any $\bs{x} \in \Omega_E$, a local function $\bs{u}_{\bs{x}} : E \rightarrow \mathbb{R}^n$ can be defined by
\[
\bs{u}_{\bs{x}}(\bs{y}) := \bs{u}(\bs{x}+\bs{y}), \qquad \bs{y} \in E.
\]
Given a generating function $e : L^2(E,\mathbb{R}^n) \rightarrow \mathbb{R}^r$, we then define an interior encoder by 
\begin{equation} \label{eq:interior_encoder}
\left(\epsilon \bs{u}\right)(\bs{x}):= e\left(\bs{u}_{\bs{x}} \right), \qquad \bs{x} \in \Omega_E,
\end{equation}

This definition should be thought of as mapping the physical data $\bs{u}$, viewed as a {\em function} in $X=L^2(\Omega,\mathbb{R}^n)$, to a new function $\epsilon \bs{u} : \Omega_E \rightarrow \mathbb{R}^r$. This allows the latent variables $\bs{\ell}(\bs{x})=(\epsilon \bs{u})(\bs{x})$ corresponding to $\bs{u}$ to be defined on the subdomain $\Omega_E$. 

It follows trivially from its definition that interior encoders are operators satisfying $\epsilon : L^2(\Omega, \mathbb{R}^n) \rightarrow L^2(\Omega_E,\mathbb{R}^r)$. However, the following result shows that the latent variable field created using the encoder $\epsilon$ are, in fact, continuous, uniformly bounded, functions.  

\begin{lemma} \label{lem:cts_int_encoder}
Let $e:L^2(E,\mathbb{R}^n) \rightarrow \mathbb{R}^r$ be an interior encoder generating function and let $\epsilon$ be defined by \eqref{eq:interior_encoder}. Then $\epsilon :  L^2(\Omega,\mathbb{R}^n) \rightarrow C(\Omega_E,\mathbb{R}^r)$.
\end{lemma}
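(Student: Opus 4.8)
The goal is to show that for any $\bs{u} \in L^2(\Omega,\mathbb{R}^n)$, the function $\epsilon \bs{u} : \Omega_E \to \mathbb{R}^r$ defined by $(\epsilon \bs{u})(\bs{x}) = e(\bs{u}_{\bs{x}})$ is continuous on $\Omega_E$. Since $e$ is continuous on $L^2(E,\mathbb{R}^n)$ by hypothesis, the natural strategy is to factor $\epsilon \bs{u}$ as the composition of $e$ with the map $\Phi_{\bs{u}} : \Omega_E \to L^2(E,\mathbb{R}^n)$, $\bs{x} \mapsto \bs{u}_{\bs{x}}$, and to prove that $\Phi_{\bs{u}}$ is continuous. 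Continuity of $\epsilon \bs{u}$ then follows immediately from continuity of a composition. So the entire content of the lemma reduces to: \emph{translation acts continuously on $L^2$}, i.e. the map $\bs{x} \mapsto \bs{u}(\bs{x} + \cdot)$, restricted to the patch $E$, depends continuously (in the $L^2(E)$ norm) on the shift $\bs{x}$.

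The plan is therefore, first, to fix $\bs{x}_0 \in \Omega_E$ and estimate $\|\bs{u}_{\bs{x}} - \bs{u}_{\bs{x}_0}\|_{L^2(E,\mathbb{R}^n)}^2 = \int_E \|\bs{u}(\bs{x}+\bs{y}) - \bs{u}(\bs{x}_0+\bs{y})\|^2 \, d\bs{y}$. By a change of variables this is $\int_{E_{\bs{x}_0}} \|\bs{u}(\bs{w} + (\bs{x}-\bs{x}_0)) - \bs{u}(\bs{w})\|^2\, d\bs{w}$, which is dominated by $\|\tau_{\bs{x}-\bs{x}_0}\bs{u} - \bs{u}\|_{L^2(K)}^2$ over a slightly enlarged compact set $K \subset \Omega$ (valid for $\bs{x}$ close enough to $\bs{x}_0$ that all relevant translates stay inside $\Omega$; here one uses that $E_{\bs{x}_0}$ is compactly contained in $\Omega$). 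The standard fact that translation is strongly continuous on $L^2(\mathbb{R}^d)$ — proved by density of $C_c(\mathbb{R}^d)$ (or $C_c(\Omega)$), for which uniform continuity gives the estimate directly, and then an $\varepsilon/3$ argument extends it to all $L^2$ functions — shows this tends to $0$ as $\bs{x} \to \bs{x}_0$. Hence $\Phi_{\bs{u}}$ is continuous at $\bs{x}_0$, and since $\bs{x}_0$ was arbitrary, $\Phi_{\bs{u}} \in C(\Omega_E, L^2(E,\mathbb{R}^n))$; composing with the continuous $e$ gives $\epsilon \bs{u} \in C(\Omega_E, \mathbb{R}^r)$.

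I would close by noting that the statement in the excerpt also mentions uniform boundedness of the latent field (in the prose preceding the lemma), which follows at once: $\|\bs{u}_{\bs{x}}\|_{L^2(E,\mathbb{R}^n)} \le \|\bs{u}\|_{L^2(\Omega,\mathbb{R}^n)}$ uniformly in $\bs{x} \in \Omega_E$, so $\{\bs{u}_{\bs{x}} : \bs{x} \in \Omega_E\}$ is a bounded subset of $L^2(E,\mathbb{R}^n)$, and since $e$ is compact it maps this bounded set into a relatively compact — in particular bounded — subset of $\mathbb{R}^r$; thus $\epsilon \bs{u}$ is a bounded continuous function. (The compactness hypothesis on $e$ is not needed for continuity alone, only for this boundedness remark and presumably for later results.)

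The main obstacle is not conceptual but a matter of care at the boundary of $\Omega$: one must ensure that for $\bs{x}$ ranging in a neighbourhood of $\bs{x}_0 \in \Omega_E$, the translated patches $E_{\bs{x}}$ remain inside $\Omega$ so that $\bs{u}_{\bs{x}}$ is well-defined and the change-of-variables bookkeeping is legitimate; this is exactly where the definition of $\Omega_E$ (points whose full neighbourhood lies in $\Omega$) is used, together with the fact that $E$ is compact so a small enough ball around $\bs{x}_0$ still has $E_{\bs{x}} \subset \Omega$. Everything else is the textbook continuity-of-translation argument plus continuity of composition.
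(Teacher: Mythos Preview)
Your proposal is correct and follows essentially the same approach as the paper: the paper's proof also factors $(\epsilon\bs{u})(\bs{x})=e(\bs{u}_{\bs{x}})$, invokes continuity of translation in $L^2$ (which it dismisses as ``a standard approximation argument'' where you actually spell out the $C_c$-density and $\varepsilon/3$ steps), and then uses continuity of $e$; the boundedness argument via $\|\bs{u}_{\bs{x}}\|_{L^2(E)}\le\|\bs{u}\|_{L^2(\Omega)}$ together with compactness of $e$ is identical. Your version is in fact more carefully argued than the paper's, particularly regarding the role of $\Omega_E$ in keeping the translated patches inside $\Omega$.
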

\begin{proof}
See Appendix \ref{app:int_encoder}.
\end{proof}

\begin{figure}
    \centering
        \includegraphics[width=0.6\textwidth,clip=true,trim=2cm 1cm 8cm 1cm]{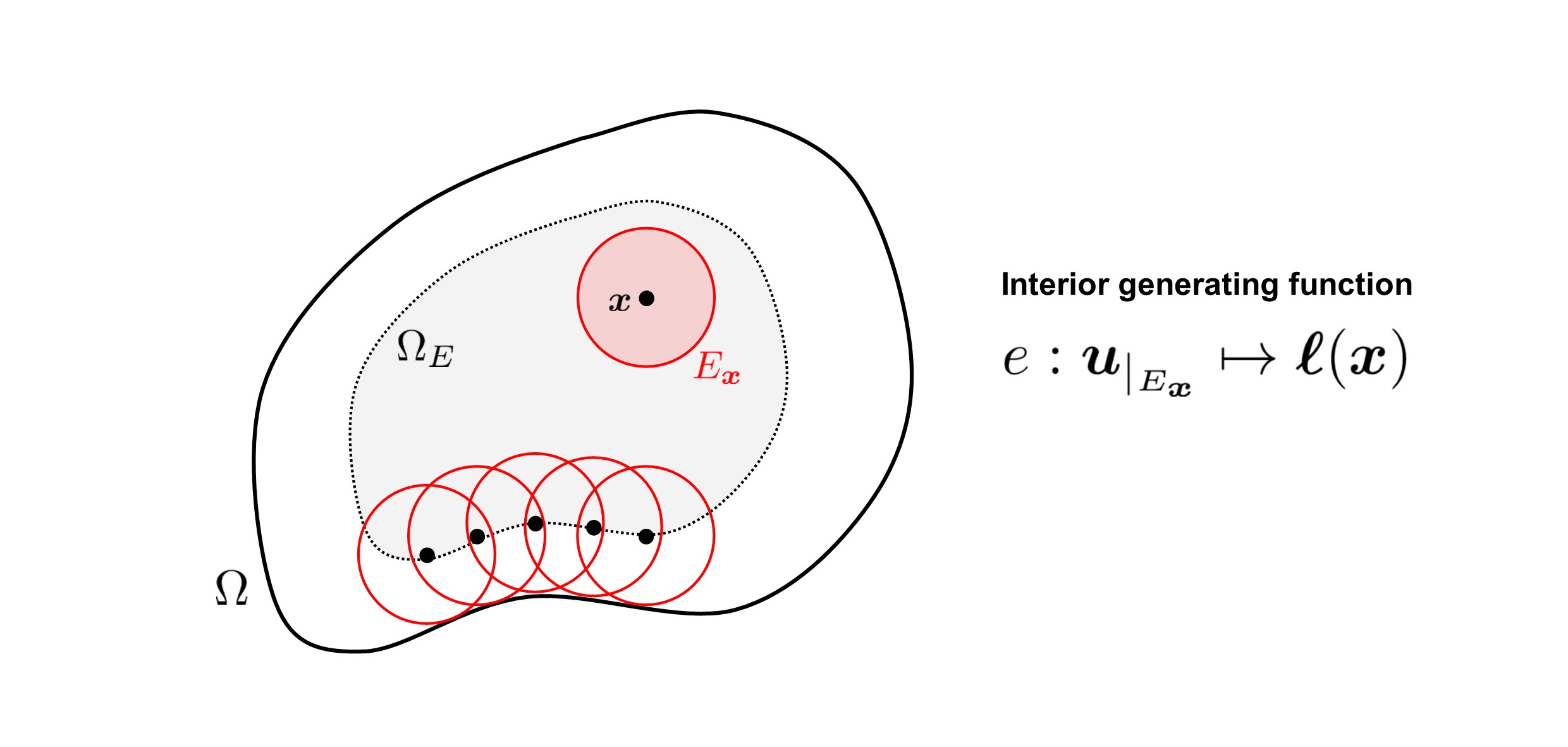}
    \caption{Schematic structure of action of an interior generating function $e:L^2(E) \rightarrow \mathbb{R}^r$. At a point $\bs{x} \in \Omega_E$, the local physical data $\bs{u}_{|_{E_x}}$ is viewed as a function in $L^2(E)$. The generating function then maps local information to the value of the latent varibles at $\bs{x}$ via  $e: \bs{u}_{|_{E_{\bs{x}}}} \mapsto \bs{\ell}(\bs{x}) \in \mathbb{R}^r$.} 
    \label{fig:interior_encoder}
\end{figure}

\subsection{Boundary Encoders} \label{sec:bdry_encoder}
We describe how to construct a boundary encoder $\epsilon^\partial$ which, given any distribution of boundary values $\bs{b} \in Y$, transforms these into a distribution of latent variables $\bs{\ell} = \epsilon^\partial \bs{b}$ on the boundary $\partial \Omega$. The construction is analogous to that of the interior encoder in \eqref{eq:interior_encoder} but with the added complication of including information about the boundary geometry. 

{\em Boundary encoder GFs:} Given a fixed, compact, set $0 \in E_\partial \subset \mathbb{R}^{d-1}$ containing the origin, a generating function for the boundary encoder is any continuous, compact, and generally nonlinear function
\begin{equation} \label{eq:gf_bdry_encoder}
e^\partial : L^2(E_\partial,\mathbb{R}^{n_\partial}) \times L^2(E_\partial,\mathbb{R}^d) \rightarrow \mathbb{R}^r.
\end{equation}
This should be understood as a mapping 
\[
e^\partial : \left\{ \footnotesize \begin{array}{c} \text{Section of} \\ \text{ boundary data} \end{array} \right\} \times \left\{ \footnotesize \begin{array}{c} \text{Section of} \\ \text{ boundary geometry} \end{array} \right\} \longmapsto \left\{ \footnotesize \begin{array}{c} \text{Latent} \\ \text{variables} \end{array} \right\}
\]
which will be used repeatably to define a semi-local boundary encoder operator. \\

\noindent
{\em Definition of Boundary Encoders:} We assume thoughout that $\partial \Omega$ is sufficiently regular that a normal vector $\bs{n}({\bs{z}}) \in \mathbb{R}^d$ and a tangent plane $T_{\bs{z}} \subset \mathbb{R}^{d-1}$ exists for every $\bs{z} \in \partial \Omega$. Each tangent plane $T_{\bs{z}}$ is defined in terms of a local-coordinate system with origin at $\bs{z}$ and whose basis vectors $(\bs{e}^{\bs{z}}_i)_{i=1}^{d-1}$ are orthogonal to $\bs{n}({\bs{z}})$. We  assume further that there exists a ball $B_R(\bs{z}) \subset \mathbb{R}^d$ of radius $R$ such that the local projection $P_{\bs{z}} : \partial \Omega \cap B_R(\bs{z})\rightarrow T_{\bs{z}}$ from the boundary to tangent plane $T_{\bs{z}}$ is one-to-one and, in addition, that there exists $\tau >0$, independent of  $\bs{z} \in \partial \Omega$, such that 
\begin{equation} \label{eq:tangent_assumption}
\{ t_i \bs{e}_i^{\bs{z}} : 0 \leq t_i < \tau\} \subset P_{\bs{z}}( B_R(\bs{z}) \cap \partial \Omega ) \subset T_{\bs{z}}, \qquad \bs{z} \in \partial \Omega
\end{equation}
and we also assume that 
\begin{equation} \label{eq:bdry_local_cond}
    E_{\partial} \subseteq (0,\tau)^{d-1}.
\end{equation}
Property \eqref{eq:bdry_local_cond} implies that we can view $E_\partial$ as a subset of the tangent plane, while \eqref{eq:tangent_assumption} then implies that a well-defined, continuous, inverse $P_{\bs{z}}^{-1} : \{t_i \bs{e}_i^{\bs{z}} : \bs{t} \in E_\partial \} \rightarrow \partial \Omega$ exists. A schematic illustration of this construction is shown in Figure \ref{fig:encoder}.       

This technical construction allows us, for each $\bs{z} \in \partial \Omega$, to define a function $\bs{b}_{\bs{z}} : E_\partial \rightarrow \mathbb{R}^{n_{\partial}}$, which depends on the boundary data local to $\bs{z}$, by
\begin{equation} \label{eq:boundary_data_inv_projection}
\bs{b}_{\bs{z}}(\bs{t}):=  \bs{b} \left(P^{-1}_{\bs{z}} (t_i\bs{e}_i^{\bs{z}}) \right), \qquad \bs{t} = (t_i)_{i=1}^{d-1} \in E_\partial,
\end{equation}
Similarly, we can also define a function $\bs{n}_{\bs{z}}:E_\partial \rightarrow \mathbb{R}^{d}$ which describes the boundary geometry local to $\bs{z}$ by
\begin{equation} \label{eq:boundary_normal_inv_proj}
\bs{n}_{\bs{z}}(\bs{t}):= \bs{n}\left(P^{-1}_{\bs{z}} (t_i\bs{e}_i^{\bs{z}}) \right), \qquad \bs{t} = (t_i)_{i=1}^{d-1} \in E_\partial.
\end{equation}

Next, using the boundary generating function $e^\partial$, the corresponding boundary encoder is defined by 
\begin{equation} \label{eq:bndry_encoder}
\left(\epsilon^{\partial} \bs{b}  \right)(\bs{z}) := e^\partial( \bs{b}_{\bs{z}}, \bs{n}_{\bs{z}}),  \qquad \bs{z} \in \partial \Omega. 
\end{equation}
Similar to the case of the interior encoder, since $e^\partial$ is assumed to be compact and continuous, an analogous proof to that of Lemma \ref{lem:cts_int_encoder} implies that  
\[
\epsilon^\partial : L^2(\partial \Omega,\mathbb{R}^{n_\partial}) \rightarrow C(\partial \Omega,\mathbb{R}^r).
\]
Hence,  the boundary latent variables $\bs{\ell}_{|_{\partial \Omega}} = \epsilon^\partial \bs{b}$ are continuous functions.

\begin{figure}
    \centering
        \includegraphics[width=0.6\textwidth,clip=true,trim=4cm 2cm 4cm 0.5cm]{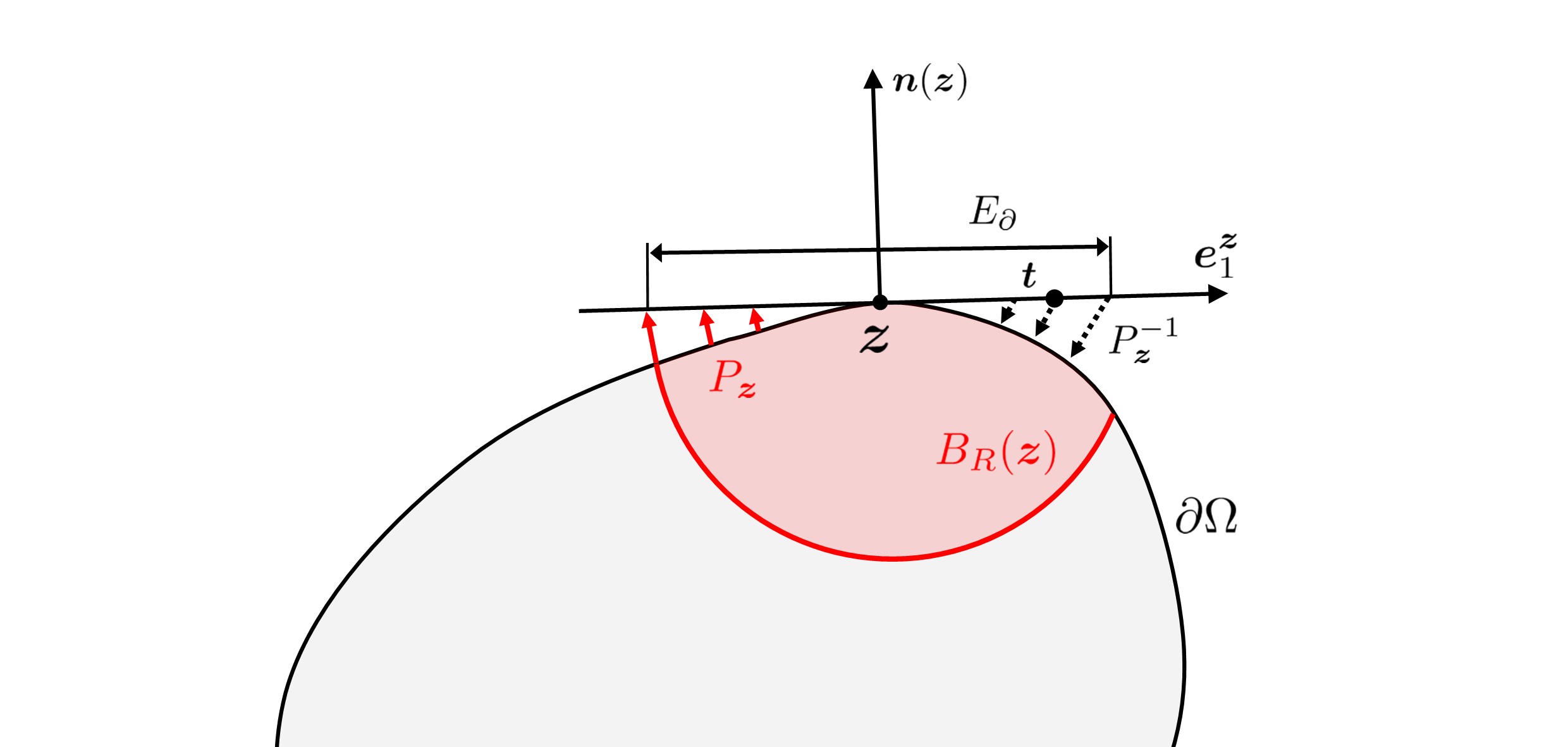}
        \caption{Schematic structure of the objects explaining the action of a boundary generating function $e^\partial : L^2(E_\partial) \times L^2(E_\partial) \rightarrow \mathbb{R}^r$. Local boundary data $\bs{b}_{\bs{z}}: \bs{t} \mapsto \bs{b}(P_{\bs{z}}^{-1} \bs{t})$ and geometry $\bs{n}_{\bs{z}} : \bs{t} \mapsto \bs{n}(P_{\bs{z}}^{-1} \bs{t})$ functions are considered as elements of $L^2(E_\partial)$. The generating function $e^\partial$ maps local boundary and geometric information to the value of the boundary latent variable at $\bs{z} \in \partial \Omega$ via  $e^\partial : (\bs{b}_{\bs{z}}, \bs{n}_{\bs{z}}) \mapsto \bs{\ell}(\bs{z})$.  } 
        \label{fig:encoder}
\end{figure}

\subsection{Decoders} \label{sec:decoder}

We describe how to construct a decoder mapping $\delta$ such that, given a distribution of latent variables $\ell \in X_L$, one can transform these into a distribution of physical variables $\bs{u} = \delta \bs{\ell}$ on the domain interior.

{\em Decoder GFs:} Given a compact, symmetric, set $0 \in D \subset \mathbb{R}^d$, a decoder generating function is any continuous, compact, and generally nonlinear mapping 
\begin{equation} \label{eq:gf_int_decoder}
d: \mathbb{R}^r \rightarrow C(D,\mathbb{R}^n).
\end{equation}
This GF should be thought of as follows: given latent variables $\bs{\ell}(\bs{y}) \in \mathbb{R}^r$ at a point $\bs{y} \in \Omega$, then $d(\bs{\ell}(\bs{y}))(\bs{x})$ gives a local prediction of the physical variables $\bs{u}(\bs{x}) \in \mathbb{R}^n$ for any $\bs{x} \in D_{\bs{y}} = \bs{y} + D$.  The idea is to use this map repeatably to build up a semi-local decoder operator. 
\vskip 1em

We define decoders in two situations, which we refer to as partition decoders and averaging decoders. 

{\em Definition of Partition Decoders:} In this case, it is assumed that there exist points $\{\bs{y}_i\}_{i=1}^{N_d} \subset \Omega$ such that the collections of sets $(\bs{y}_i + D)_{i=1}^{N_d}$ forms a disjoint partition of $\Omega$. Now, let $\bs{\ell} \in X_L$ be a latent variable distribution and let $\bs{x} \in \Omega$. Due to the assumed partition property, there is a unique index $j \in \{1,\dots,N_d\}$ such that $\bs{x} \in \bs{y}_j +D$. Consequently, $\bs{x}-\bs{y}_j \in D$ and we define a decoded value $\bs{u}_{\bs{\ell}}(\bs{x})$ by
\[
(\delta \bs{\ell})(\bs{x}) := d(\bs{\ell}(\bs{y}_j))(\bs{x}-\bs{y}_j). 
\]
Consequently, we can view $\delta$ as an operator $\delta : X_L \rightarrow X$ and we can create an approximation to the physical variables by letting $\bs{u}(\bs{x}) = (\delta \bs{\ell})(\bs{x})$. 

An advantage of using a partition decoder is that if $D$ is be chosen as a coarse discretization of $\Omega$, then this can reduce the computational cost of implementing the decoder. However, there are two potential disadvantages of this choice. First, requiring the decoder to extrapolate from latent to physical variables over a large set $D$ may introduce approximation errors to the solution. Second, while $\delta \bs{\ell}$ is guaranteed to be square-integrable (as an element of $X$), there is no guarantee that the resulting physical solution $\delta \bs{\ell}$ is smooth, or even continuous. If such a proprerty is desirable, the it is possible to instead implement the following notion of an averaging decoder.

{\em Definition of Averaging Decoders:} Let $\bs{\ell} \in X_L$ be a latent variable distribution and fix $\bs{x} \in \Omega$. Now, for any point $\bs{y}$ such that $\bs{x} \in D_{\bs{y}}$, it follows from the definition of decoder GFs that a prediction of the physical variables at $\bs{x}$ can be obtained using the function $d(\bs{\ell}({\bs{y}}))$. The idea is to average all such possible predictions. To simplify the resulting expression, note that since $D$ is symmetric, 
\[
\bs{x} \in D_{\bs{y}} \Leftrightarrow \bs{x}-\bs{y} \in D \Leftrightarrow \bs{y}-\bs{x} \in D \Leftrightarrow \bs{y} \in D_{\bs{x}},
\]
meaning that $\bs{x}$ can be predicted from any point $\bs{y} \in D_{\bs{x}} \cap \Omega$, as illustrated schematically in Figure \ref{fig:decoder}, and that the value of the prediction from the point $\bs{y}$ at $\bs{x}$ is $d(\bs{\ell}(\bs{y}))(\bs{x}-\bs{y})$.

\begin{figure}
\centering
\includegraphics[width=0.6\textwidth,clip=true,trim=3cm 0cm 6cm 0cm]{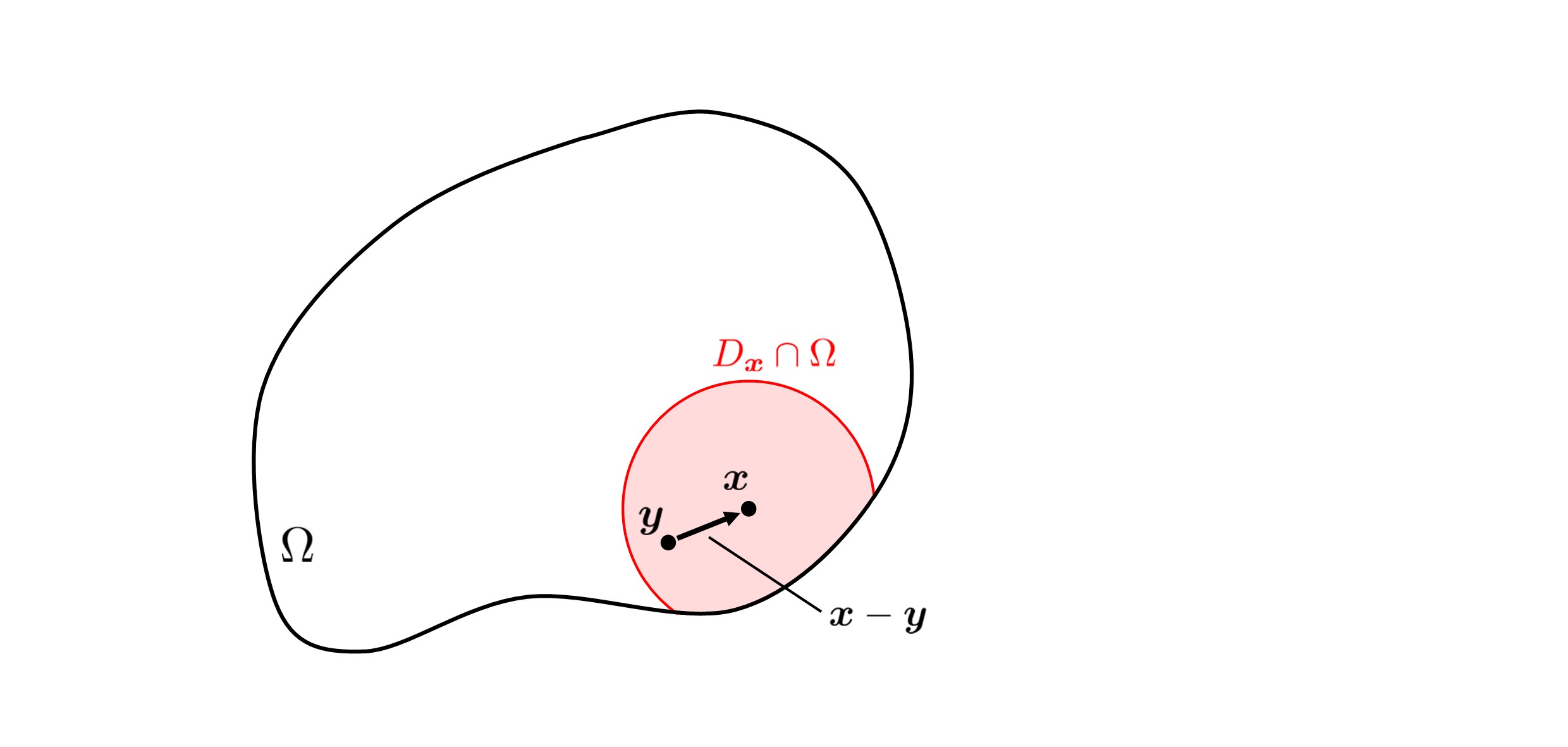}    
\caption{Schematic structure of the action of a decoder generating function $d:\mathbb{R}^r \rightarrow C(D)$. For any $\bs{x} \in \Omega$, the predicted physical value from a point $\bs{y} \in D_{\bs{x}} \cap \Omega$ is $d(\bs{\ell}(\bs{y}))(\bs{x}-\bs{y})$. The decoder operator $\delta$ averages over all such possible values.}
\label{fig:decoder}
\end{figure}

Consequently, given a function $\ell \in C(\Omega,\mathbb{R}^r)$, we define a decoder mapping $\delta$ by
\begin{equation} \label{eq:decoder}
(\delta \bs{\ell})(\bs{x}):= \frac{1}{|D_{\bs{x}} \cap \Omega|} \int_{D_{\bs{x}} \cap \Omega} d(\bs{\ell}(\bs{y}))(\bs{x}-\bs{y}) \, d\bs{y}, \qquad \bs{x} \in \Omega. 
\end{equation}

The following lemma shows that continuous generating functions create decoders which themselves produce continuous functions on the entire domain $\Omega$. 

\begin{lemma} \label{lem:decode_cts}
Let $d:\mathbb{R}^r \rightarrow C(D,\mathbb{R}^n)$ be a decoder generating function and let $\delta$ be defined by \eqref{eq:decoder}. Then $\delta : C(\Omega,\mathbb{R}^r) \rightarrow C(\Omega,\mathbb{R}^n)$. 
\end{lemma}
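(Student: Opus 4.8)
The plan is to show that for $\bs{\ell} \in C(\Omega,\mathbb{R}^r)$, the function $\delta\bs{\ell}$ defined by \eqref{eq:decoder} is continuous on $\Omega$ by estimating $|(\delta\bs{\ell})(\bs{x}) - (\delta\bs{\ell})(\bs{x}')|$ for $\bs{x}$ close to $\bs{x}'$. There are two sources of variation to control: (i) the integrand $\bs{y} \mapsto d(\bs{\ell}(\bs{y}))(\bs{x}-\bs{y})$ changes because both the evaluation point $\bs{x}-\bs{y}$ inside $D$ and (trivially not) the latent value shift with $\bs{x}$; and (ii) the domain of integration $D_{\bs{x}}\cap\Omega$ and its measure $|D_{\bs{x}}\cap\Omega|$ change with $\bs{x}$. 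First I would fix $\bs{x}_0 \in \Omega$ and a small closed ball $\bar B$ around it contained in $\Omega$; since $D$ is compact, $\bigcup_{\bs{x}\in\bar B} D_{\bs{x}}$ is a compact subset $K$ of $\mathbb{R}^d$, and $\bar K := K \cap \Omega$ (or its closure intersected with $\Omega$) is where all the relevant latent values live. On $\bar K$, $\bs{\ell}$ is uniformly continuous and bounded, say $\|\bs{\ell}(\bs{y})\| \le M$ for all $\bs{y}$ in this set.

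Next I would exploit the regularity of the generating function $d:\mathbb{R}^r \to C(D,\mathbb{R}^n)$. Since $d$ is continuous and compact, its image $d(\bar B_M^r)$ of the closed ball of radius $M$ in $\mathbb{R}^r$ is relatively compact in $C(D,\mathbb{R}^n)$; by the Arzelà–Ascoli theorem this image is uniformly bounded and equicontinuous as a family of functions on $D$. This gives two uniform bounds valid for all latent inputs of norm $\le M$: a uniform sup-norm bound $\|d(\bs{v})\|_{C(D)} \le C_M$, and a uniform modulus of continuity $\omega_D$ such that $|d(\bs{v})(\bs{p}) - d(\bs{v})(\bs{p}')| \le \omega_D(|\bs{p}-\bs{p}'|)$ for all $\|\bs{v}\|\le M$ and $\bs{p},\bs{p}' \in D$. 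Combined with continuity of $d$ itself (equivalently, $\bs{v}\mapsto \|d(\bs{v})\|_{C(D)}$ continuous, or: $\bs{v}\mapsto d(\bs{v})$ continuous into $C(D)$ so in particular $|d(\bs{v})(\bs{p}) - d(\bs{v}')(\bs{p})|$ is small when $\bs{v},\bs{v}'$ are close, uniformly in $\bs{p}$), and the uniform continuity of $\bs{\ell}$ on $\bar K$, the integrand $\bs{y}\mapsto d(\bs{\ell}(\bs{y}))(\bs{x}-\bs{y})$ is seen to be uniformly bounded by $C_M$ and, as $\bs{x}\to\bs{x}'$, to converge uniformly in $\bs{y}$ to $d(\bs{\ell}(\bs{y}))(\bs{x}'-\bs{y})$: indeed $|d(\bs{\ell}(\bs{y}))(\bs{x}-\bs{y}) - d(\bs{\ell}(\bs{y}))(\bs{x}'-\bs{y})| \le \omega_D(|\bs{x}-\bs{x}'|)$.

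Then I would handle the moving domain. Write $(\delta\bs{\ell})(\bs{x}) = \frac{1}{|D_{\bs{x}}\cap\Omega|}\int_{D_{\bs{x}}\cap\Omega} g_{\bs{x}}(\bs{y})\,d\bs{y}$ with $g_{\bs{x}}(\bs{y}) = d(\bs{\ell}(\bs{y}))(\bs{x}-\bs{y})$, and split the difference $(\delta\bs{\ell})(\bs{x}) - (\delta\bs{\ell})(\bs{x}_0)$ into a term where only the integrand changes (controlled by $\omega_D(|\bs{x}-\bs{x}_0|)$ and the symmetric difference of the domains as in the next sentence) and a term where only the domain/normalizing factor changes. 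For the latter, the key point is that $\bs{x}\mapsto |D_{\bs{x}}\cap\Omega|$ is continuous and bounded away from zero near $\bs{x}_0$: since $D$ is symmetric with $0$ in its (relative) interior — here I would use that $0\in D$ and, to get a nonzero measure, that $D$ has nonempty interior, which is implicit in $D$ being a "patch" (if not, one restricts to $\bs{x}$ in the interior of $\Omega$ where $D_{\bs{x}}\cap\Omega$ has the same measure as $D$); $\bs{x}_0\in\Omega$ open forces $|D_{\bs{x}_0}\cap\Omega|>0$, and $|D_{\bs{x}}\cap\Omega| \to |D_{\bs{x}_0}\cap\Omega|$ by dominated convergence applied to indicator functions, using that $\partial(D_{\bs{x}_0})\cap\Omega$ and $\partial\Omega$ are Lebesgue-null. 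The symmetric difference $|D_{\bs{x}}\triangle D_{\bs{x}_0}| \to 0$ as $\bs{x}\to\bs{x}_0$ controls both the change in the normalizing constant and the contribution of the integrand over the part of the domain that appears or disappears, using the uniform bound $C_M$. Assembling the three small contributions gives $|(\delta\bs{\ell})(\bs{x}) - (\delta\bs{\ell})(\bs{x}_0)| \to 0$, proving continuity at $\bs{x}_0$; since $\bs{x}_0\in\Omega$ was arbitrary, $\delta\bs{\ell}\in C(\Omega,\mathbb{R}^n)$.

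The main obstacle I anticipate is the bookkeeping around the moving domain of integration — specifically, cleanly proving continuity and strict positivity of $\bs{x}\mapsto|D_{\bs{x}}\cap\Omega|$ and the $L^1$-continuity of the map $\bs{x}\mapsto \mathbf{1}_{D_{\bs{x}}\cap\Omega}$. This requires a mild regularity assumption on $D$ and $\partial\Omega$ (that their boundaries are Lebesgue-null, which the paper's standing smoothness hypotheses on the geometry supply) and is where one should be careful rather than merely routine. By contrast, the integrand estimates are genuinely routine once Arzelà–Ascoli is invoked: the compactness of the generating function $d$ does all the real work, exactly as in the proof of Lemma \ref{lem:cts_int_encoder}, and I would mirror that argument as closely as possible, citing it where the two overlap.
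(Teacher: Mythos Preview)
Your proposal is correct and follows essentially the same approach as the paper: both split $|(\delta\bs{\ell})(\bs{x}) - (\delta\bs{\ell})(\bs{z})|$ into contributions from the changing integrand over the common domain, the symmetric difference of the moving domains $D_{\bs{x}}\cap\Omega$, and the changing normalising constant, then control the first via equicontinuity of the family $\{d(\bs{\ell}(\bs{y}))\}_{\bs{y}}$ (obtained from compactness of $d$ and boundedness of $\bs{\ell}(\Omega)$, exactly your Arzel\`a--Ascoli step) and the latter two via the uniform sup-bound on that family together with $|D_{\bs{x}}\triangle D_{\bs{z}}|\to 0$. The only cosmetic difference is that the paper works globally using compactness of $\bs{\ell}(\bar\Omega)$ rather than localising to a ball around $\bs{x}_0$, and it does not dwell on the regularity of $\bs{x}\mapsto|D_{\bs{x}}\cap\Omega|$ that you (rightly) flag as the place requiring care.
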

\begin{proof}
See Appendix \ref{sec:app_decode}.
\end{proof}

\subsection{Elliptic Systems} The final component required to define a SINN operator is an elliptic system. We simply refer to any symmetric, strictly positive definite, matrix
\begin{equation} \label{eq:gf_elliptic_system}
\mathcal{A} \in \mathbb{S}_{++}^{(rd)^2} 
\end{equation}
as a generating function from which an elliptic operator $D_\mathcal{A}$ and the associated boundary value problem \eqref{eq:elliptic_system} can be defined. This then generates the solution operator $\mathcal{E}_\mathcal{A} : Y_L \rightarrow X_L$.

\section{Training generating functions} \label{sec:training}

It is worth summarising the constructions developed in \S \ref{sec:generating_functions}. Given localisation sets $G:=(D,E,E_\partial)$ and local generating functions $(d,e,e^\partial,\mathcal{A})$, one can use \eqref{eq:interior_encoder}, \eqref{eq:bndry_encoder} and \eqref{eq:decoder} to define a globalisation mapping 
\[
\mathcal{G}_G :(d,e,e^\partial,\mathcal{A}) \mapsto (\delta,\epsilon,\epsilon^\partial,\mathcal{E}_\mathcal{A})
\]
which outputs an interior encoder $\epsilon$, boundary encoder $\epsilon^\partial$, decoder $\delta$ and elliptic system solution operator $\mathcal{E}_\mathcal{A}$. These components can then be combined to give a SINN operator  $\mathcal{F} = \delta \circ \mathcal{E}_\mathcal{A} \circ \epsilon^\partial$ in \eqref{eq:F_composition}. The aim now is to use the available data ensemble
\begin{equation} \label{eq:training_data}
\mathcal{U} = \left( \bs{u}_j(x), \bs{b}_j(\bs{z}) \right)_{j=1}^{N_T}, \qquad \bs{x} \in \Omega, \bs{z} \in \partial\Omega,
\end{equation}
to obtain identify an optimal generating functions and, consequently, optimal SINN operators.

\subsection{The cost function for SINN training}

Training will be posed as a minimisation problem, and a schematic for the cost function to be minimised is given by the four-stage process shown in Figure \ref{fig:training_run}. In the following, it is assumed that $(\bs{u},\bs{b}) \in \mathcal{U}$ is a snapshot selected from the training data ensemble.\\

\noindent
{\bf Stage 1:} Fix a set $\{\bs{p}_i\}_{i=1}^M \subset \Omega_E$ of {\em training points}. At each training point, it is assumed that a {\em training patch} exists, which is defined as the convex hull\footnote{The convex hull of a set of points is the smallest convex subset containing all such points.} of a set of points  $\{\bs{q}_{ij}\}_{j=1}^N \in \Omega \cup \partial \Omega$ local to $\bs{p}_i$, an example of which is shown in Figure \ref{fig:training_run}~(a). Specifically, for each $i$, we assume that there exist points  satisfying 
\begin{itemize}
  \item[$(i)$] $\bs{q}_{ij} \in \Omega_E \cup \partial \Omega$, for each $j=1,\dots,N$;
    \item[$(ii)$] $\bs{p}_i \in \text{int}(Q_i)$ where $Q_i=\text{conv}\{ \bs{q}_{ij}: j=1,\dots,N\}$;
    \item[$(iii)$]  $\bs{q}_{ij} \in \partial Q_i$, for each $j=1,\dots,N$.  
\end{itemize}
For any generating functions $e,e^\partial$, assumption (i) implies that we can compute the latent variables $\bs{\ell}_i:=e(\bs{u}_{\bs{p}_i})$ and 
\[
\bs{\ell}_{ij}:= \left\{ \begin{array}{rcl} e(\bs{u}_{\bs{q}_{ij}}), && \text{if} \; \bs{q}_{ij} \in \Omega; \\
e^\partial( \bs{b}_{\bs{q}_{ij}}, \bs{\eta}_{\bs{q}_{ij}}), && \text{if} \; \bs{q}_{ij} \in \partial \Omega.
\end{array} \right. , \qquad j=1,\dots,N. 
\]

\vskip 1em
\noindent
{\bf Stage 2:} Conditions $(ii)$ and $(iii)$ from Stage 1 imply that the convex hull $Q_i \subset \mathbb{R}^{d}$ is a polytope and that each  $\bs{q}_{ij}$ is an exterior point on its boundary $\partial Q_i \subset \mathbb{R}^{d-1}$. Linear interpolation can then be used to obtain a function $\bs{f}_i \in C(\partial Q_i,\mathbb{R}^r)$ satisfying $\bs{f}_i(\bs{q}_{ij}) = \bs{\ell}_{ij}$ for each $j=1,\dots,N$. This process is indicted in Figure \ref{fig:training_run}~(b).

\begin{figure}
\centering
\subfigure[Training Patch]{
\includegraphics[width=0.4\textwidth,clip=true,trim=0.5cm 2cm 12cm 2cm]{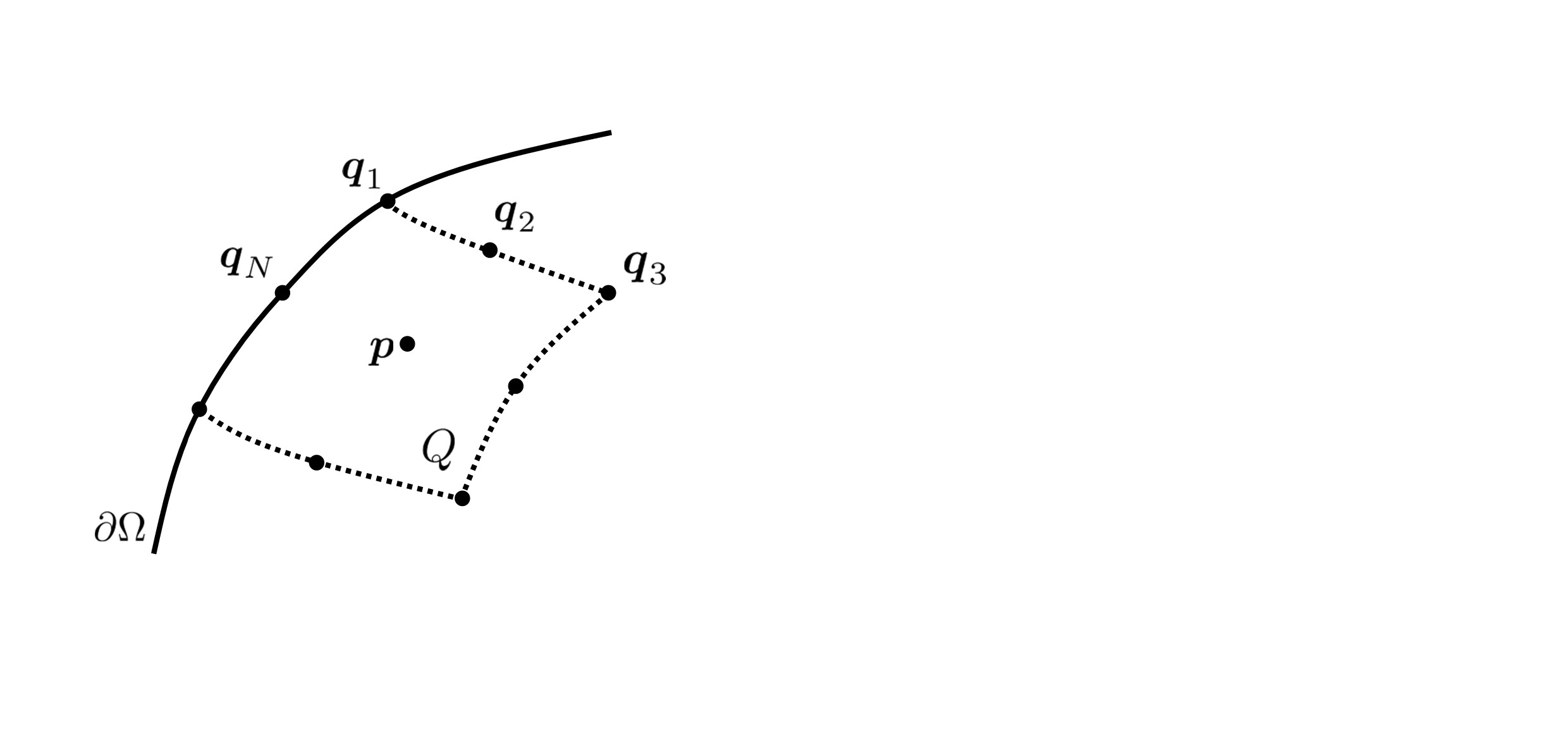}   
}
\qquad 
\subfigure[Latent variables on $\partial Q$]{
\includegraphics[width=0.4\textwidth,clip=true,trim=0.5cm 2cm 12cm 2cm]{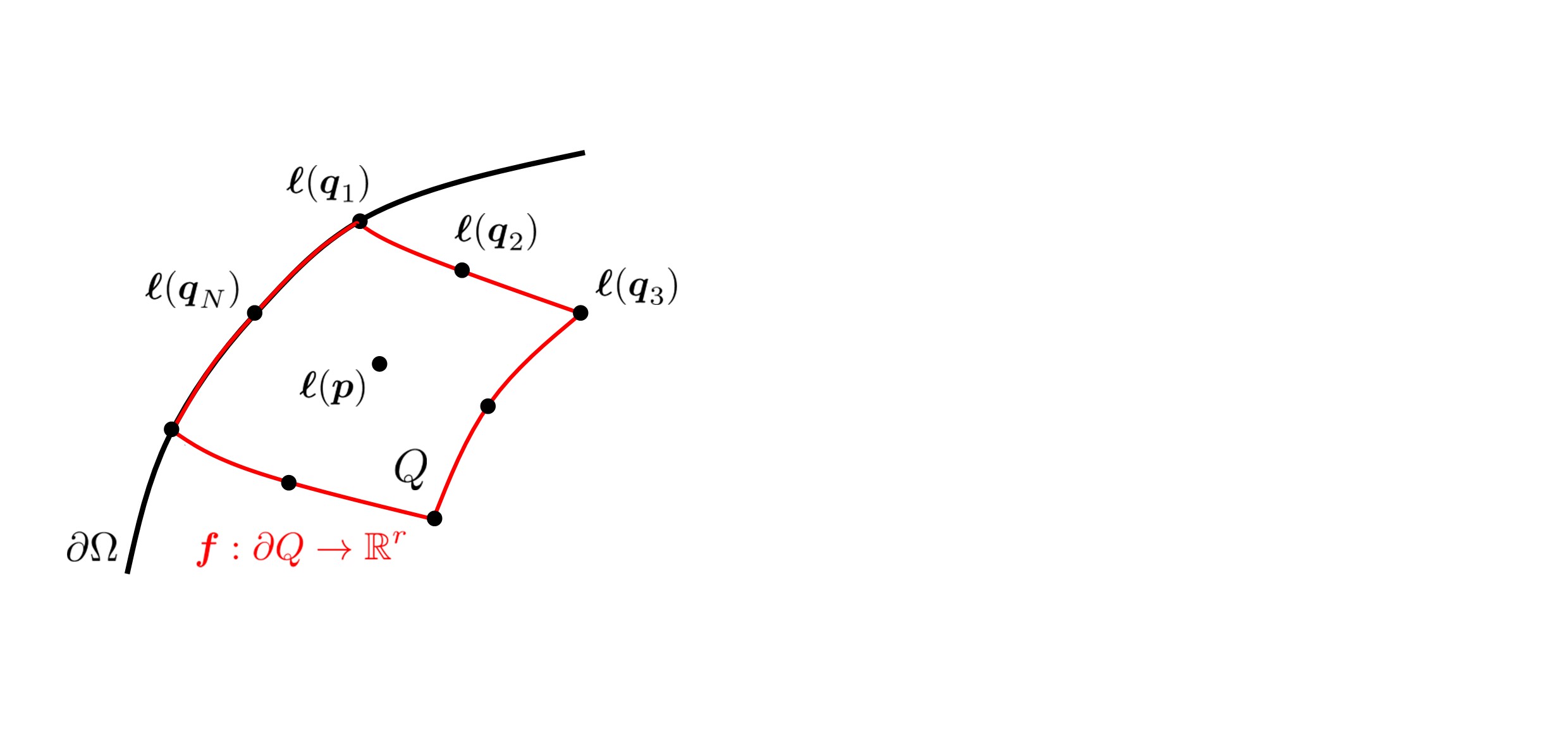}   
}
\\
\subfigure[Elliptic extension of latent variables to $Q$.]{
\includegraphics[width=0.4\textwidth,clip=true,trim=0.5cm 2cm 12cm 2cm]{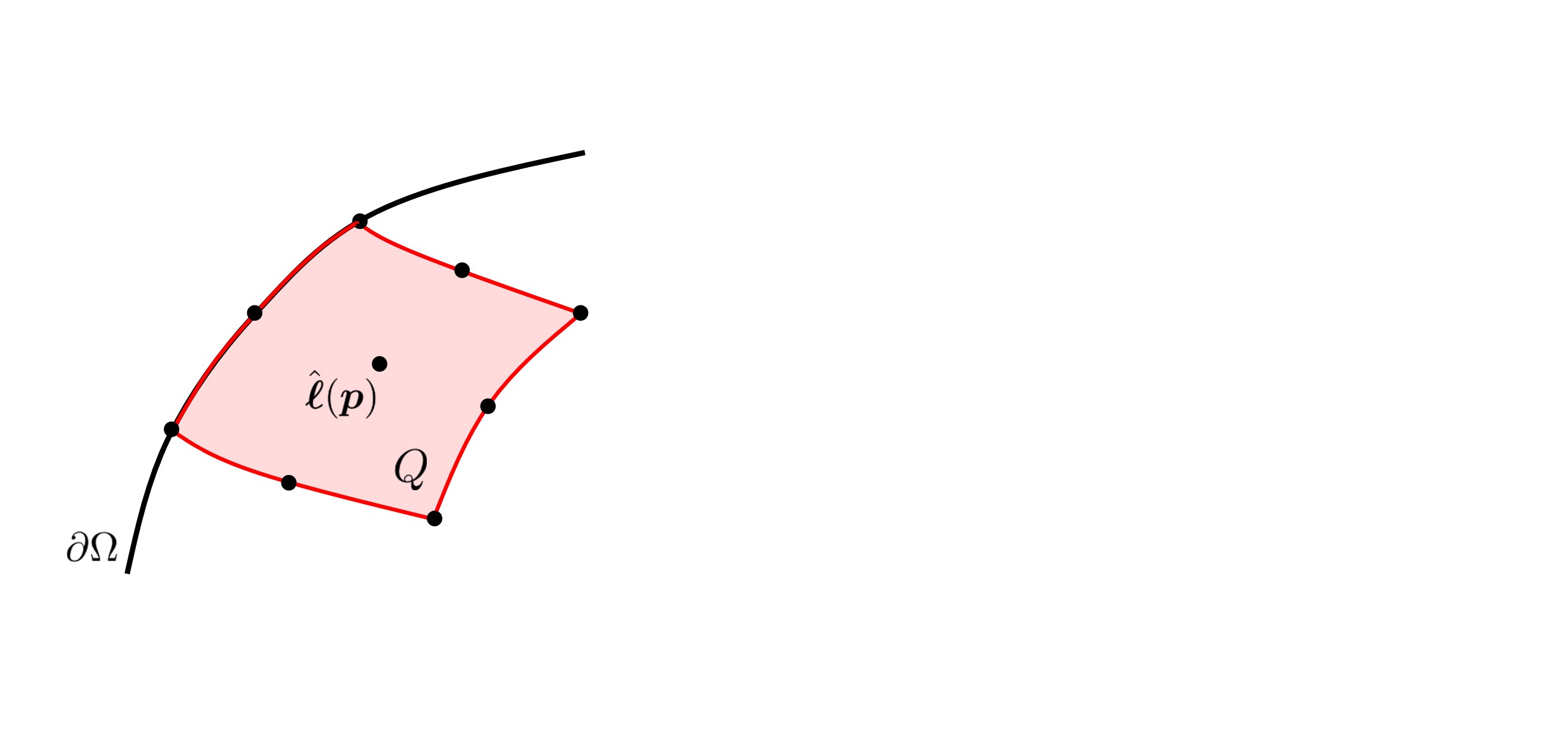}   
}
\qquad
\subfigure[Decoded physical variables at $\bs{p}$.]{
\includegraphics[width=0.4\textwidth,clip=true,trim=0.5cm 2cm 12cm 2cm]{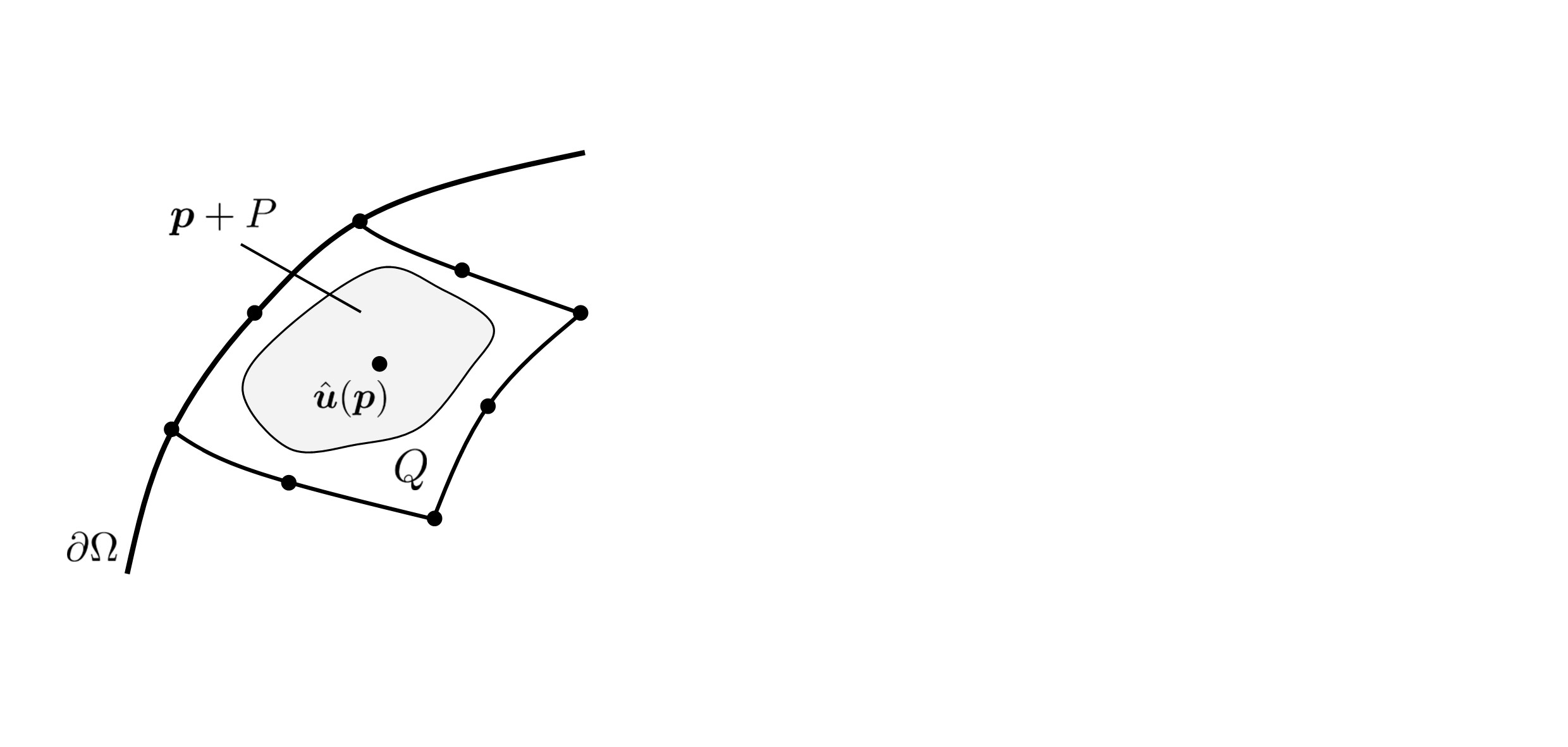}   
}
\caption{Schematic overview of variables involved in a training run. (a) A training patch with $Q$, containing central point $\bs{p}$, formed as the convex hull of exterior points $\{ \bs{q}_i\}$. (b) Latent variables computed to form a function $\bs{f}:\partial Q \rightarrow \mathbb{R}^r$ on the training patch boundary. (c) Elliptic extension defines latent variables in $Q$, in particular at $\hat{\ell}(\bs{p})$. (d) The decoder allows a prediction of the physical variables at $\bs{p}$.}
\label{fig:training_run}
\end{figure}

\vskip 1em 
\noindent
{\bf Stage 3:} Given a generating matrix $\mathcal{A} \in \mathbb{R}^{dr \times dr}$, define the associated linear elliptic operator $D_\mathcal{A}$, and solve the boundary value problem
\begin{equation} \label{eq:patch_elliptic_system}
\begin{split}
    D_\mathcal{A} \hat{\bs{\ell}} &= 0 \qquad \text{in} \; Q_i, \\ \hat{\bs{\ell}}_{|_{\partial Q_i}} &= \bs{f}_i \qquad \text{on} \;\partial Q_i,
\end{split}
\end{equation}
on the training patch $Q_i$. From this, a predicted value $\hat{\bs{\ell}}_i := \hat{\bs{\ell}}(\bs{p}_i) \in \mathbb{R}^r$ can be obtained, as shown in Figure \ref{fig:training_run}~(c). It is then natural to define the error function 
\[
\Psi_1((\bs{u},\bs{b}), (e,e^\partial,\mathcal{A})):=\frac{1}{Mr} \sum_{i=1}^M \left\| \bs{\ell}_i - \hat{\bs{\ell}}_i \right\|_2^2.
\]
which quantifies the error, averaged over all training points $\{ \bs{p}_i \}_{i=1}^M$, between the encoded latent variables computed using the generating function $e$, and their predictions from the boundary $\partial Q_i$ using the elliptic system $\mathcal{E}_\mathcal{A}$. \\

\noindent
{\bf Stage 4.} Given a decoder $d:\mathbb{R}^r \rightarrow C(D,\mathbb{R}^n)$, we create predictions for the physical variables, in the local sets $D_{\bs{p}_i}$ of points close to the training points, using 
\[
\hat{\bs{u}}(\bs{x}):= d(\hat{\bs{\ell}}_i)(\bs{x} - \bs{p}_i), \qquad \bs{x} \in D_{\bs{p}_i}. 
\]
A second error function 
\[
\Psi_2((\bs{u}, \bs{b}),(d,e,e^\partial,\mathcal{A})):= \frac{1}{M|D|} \sum_{i=1}^M \int_D | \bs{u}(\bs{p}_i + \bs{y}) - \hat{\bs{u}}(\bs{p}_i + \bs{y})|_2^2 d\bs{y}.
\]
then quantifies whether the decoder generating function $d$ is able to accurately recreate the physical data, averaged across a subset of $\Omega$ local to the chosen training points.

The four-stage process described above allows us, for each data point $(\bs{u},\bs{b}) \in \mathcal{U}$ to define two functions $\Psi_1,\Psi_2$ which quantify the error associated with a given quadruple of generating functions $\mathcal{X}:=(d,e,e^\partial,\mathcal{A})$. We  therefore define the ensemble cost function by
\begin{equation} \label{eq:cost_function}
\Psi\left( \mathcal{U}, \mathcal{X}\right):= \frac{1}{N_T} \sum_{j=1}^{N_T} \left[ \Psi_1((\bs{u}_j,\bs{b}_j), \mathcal{X}) + \alpha \Psi_2((\bs{u}_j,\bs{b}_j), \mathcal{X}) \right],
\end{equation}
where $\alpha>0$ is a weighting parameter.

\section{Numerical Implementation} \label{sec:Methodology}

 In  \S \ref{sec:generating_functions} it was shown how the generating functions $(e,e^\partial, d, \mathcal{A})$ act as building blocks for SINNs. The generating functions $(e,e^\partial,d)$ and, however, still {\em infinite dimensional} nonlinear functionals. A feasible approach to SINN training must therefore parameterise the infinite-dimensional generating functions $(e,e^\partial,d)$ by mappings between finite-dimensional spaces. The assumed semi-local structure of these operators will be used to achieve this in a simple manner, avoiding the need to impose any restrictive structure on the resulting approximation.  

\subsection{Interior generating functions} \label{sec:num_interior_gf}

The aim is to create an ensemble of possible mappings  $e:L^2(E,\mathbb{R}^n)\rightarrow \mathbb{R}^r$ using only finite dimensional functionals. The first step is to partition $E \subset \mathbb{R}^d$ into a union of $N_E$ subsets,
\begin{equation} \label{eq:E_partition}
E = \bigcup_{i=1}^{N_E} E_i. 
\end{equation}
Then, given any function  $\bs{u} \in L^2(E,\mathbb{R}^n)$, we can create a vector $\mathcal{D}\bs{u} \in \mathbb{R}^{n \times N_E}$ by taking its the average of $\bs{u}$ over each of the $N_E$  subsets. Consequently, any {\em finite-dimensional} mapping $\tilde{e} : \mathbb{R}^{n \times N_E} \rightarrow \mathbb{R}^r$ can be used to define an encoder $e:L^2(E,\mathbb{R}^n) \rightarrow \mathbb{R}^r$ by forming the composition $e = \tilde{e} \circ \mathcal{D}$.

\subsection{Boundary generating functions} \label{sec:num_bdry_gf}

The aim is to create boundary generating function $e^\partial : L^2(E_\partial,\mathbb{R}^{n_\partial}) \times L^2(E_\partial,\mathbb{R}^{d}) \rightarrow \mathbb{R}^r$. Using the same idea as before,  partition  $E_\partial \subset \mathbb{R}^{d-1}$ into $N_{E_\partial}$ sets
\begin{equation} \label{eq:E_bdry_partition}
E_\partial = \bigcup_{i=1}^{N_{E_\partial}} (E_\partial)_i. 
\end{equation}
Then, for any functions $\bs{f} \in L^2(E_\partial,\mathbb{R}^{n_\partial})$ and $\bs{\eta} \in L^2(E_\partial,\mathbb{R}^{d})$ we can form vectors $\mathcal{D}^\partial \bs{f} \in \mathbb{R}^{n_\partial \times N_{E_\partial}}$ and $\mathcal{D}^{\partial} \bs{\eta} \in \mathbb{R}^{d \times N_{E_\partial}}$ by averaging these functions over each subset $(E_\partial)_i$. Consequently, any finite-dimensional map $\tilde{e}^\partial : \mathbb{R}^{(n_\partial + d)\times N_{E_\partial}} \rightarrow \mathbb{R}^r$ induces a boundary encoder $e^\partial :L^2(E_\partial,\mathbb{R}^{n_\partial}) \times L^2(E_\partial,\mathbb{R}^{d}) \rightarrow \mathbb{R}^r$ via the composition $e^\partial := \tilde{e}^\partial \circ \mathcal{D}^\partial$. 

A summary of the constructions developed so-far is given in Table \ref{table:encoders}. It should be emphasised that due to the {\em semi-local} role of the encoder generating function $e$, it is not important to impose any particular structure on the partitions \eqref{eq:E_partition} or \eqref{eq:E_bdry_partition}. The only requirement is to form a sufficiently resolved local  approximation to the underlying data.

\begin{table}[h!]
\centering
\footnotesize
\begin{tabular}{|c |c| c|} 
 \hline
  & Interior Encoder & Boundary Encoder \\ [0.5ex] 
 \hline\hline
Semi-local &  $\begin{array}{rcl} \epsilon :L^2(\Omega) &\rightarrow & C(\Omega) \\  \bs{u} &\stackrel{\eqref{eq:interior_encoder}}{\mapsto} & 
e(\bs{u}_{\bs{x}}) \end{array}$ & $\begin{array}{rcl} \epsilon^\partial : L^2(\partial \Omega) \times L^2(\partial \Omega) &\rightarrow & C(\partial \Omega) \\ (\bs{b}, \bs{n}) &\stackrel{\eqref{eq:bndry_encoder}}{\mapsto} & e^\partial(\bs{b}_{\bs{z}},\bs{n}_{\bs{z}}) \end{array}$ \\\hline 
GF  &  $\begin{array}{c} e :L^2(E) \rightarrow  \mathbb{R}^r \\ e = \tilde{e} \circ \mathcal{D} \end{array}$&   $\begin{array}{c} e^\partial :L^2(E_\partial) \times L^2(E_\partial) \rightarrow  \mathbb{R}^r \\ e^\partial = \tilde{e}^\partial \circ \mathcal{D}^\partial \end{array}$ \\ \hline
$\begin{array}{r} \text{Finite} \\ \text{Dim.} \end{array}$   &  $\tilde{e} : \mathbb{R}^{n \times N_E} \rightarrow \mathbb{R}^r$  & $\tilde{e}^\partial : \mathbb{R}^{(n_\partial + d) \times N_{E_{\partial}}} \rightarrow \mathbb{R}^r $  \\ \hline
\end{tabular}
\caption{Structure of the Interior and Boundary Encoders.}
\label{table:encoders}
\end{table}

\subsection{Decoder Generating Functions} \label{sec:num_decoder_gf}

We aim to create functions $d:\mathbb{R}^r \rightarrow C(D,\mathbb{R}^n)$. To do this, choose $N_D$ points $\{d_i\}_{i=1}^{N_D} \subset \mathbb{R}^d$ whose convex hull contains $D \in \mathbb{R}^d$, and let $\mathcal{I} : \mathbb{R}^{N_D} \rightarrow C(D,\mathbb{R}^n)$ be any interpolation operator, such as linear interpolation, which continuously extends known functional values at the points $d_i$ to the whole of $D$. Then, any mapping $\tilde d:\mathbb{R}^r \rightarrow \mathbb{R}^{N_D}$ induces a decoder generating function via $d := \mathcal{I} \circ \tilde d$.

\subsection{Elliptic system solution on training patches} \label{sec:num_train_patch} 

Effective training of interior and boundary encoders requires a choice of training patches which provide a good sample of both the domain interior and the domain boundary. For simplicity, we only describe the case of rectangular domains $\Omega$. In this case, by considering training patches $Q = \text{conv}\{ \bs{q}_i\} \subset \bar{\Omega}$ whose boundary $\partial Q$ is also rectangular, these may be chosen to either lie entirely in the domain interior, or to  coincide with a portion of the boundary $\partial \Omega$. 

After appropriate interpolation of the latent variables $\ell(\bs{q}_i)$ to the rectangular boundary $\partial Q$, standard second-order finite difference schemes can be used to solve the required elliptic system \eqref{eq:patch_elliptic_system} on each training patch to obtain the prediction $\hat{\bs{\ell}}(\bs{p})$ at the training point $\bs{p} \in Q$.

\subsection{Cost function minimisation} \label{sec:cost_min}

Suppose that an ensemble $\mathcal{U}$ of training data of the form \eqref{eq:training_data} is available. Suppose that the latent space dimension $r$, the partition dimensions $N_E, N_{E_\partial}$ for the encoders, and the decoder discretisation dimension $N_D$  are all given. The problem of model training is now reduced to finding mappings $\tilde{e}:\mathbb{R}^{n \times N_E} \rightarrow \mathbb{R}^r, \tilde{e}^\partial : \mathbb{R}^{(n_\partial + d)\times N_{E_{\partial}}} \rightarrow \mathbb{R}^r, d : \mathbb{R}^{r} \rightarrow \mathbb{R}^{N_D}$ and a matrix $\mathcal{A} \in \mathbb{R}^{rd \times rd}$ which minimise that for which the cost function \eqref{eq:cost_function} is minimised.

Each of the generating functions $\tilde{e},\tilde{e}^\partial$ and $d$ are assumed to be fully connected feed-forward neural networks. A neural network (NN) with $L$ layers is a mapping $\mathcal{N}: \mathbb{R}^{m_{I}} \rightarrow \mathbb{R}^{m_O}$ formed by repeated composition of a prescribed nonlinear {\em activation function} $\phi:\mathbb{R} \rightarrow \mathbb{R}$ and a series of affine maps $A_i : \bs{x} \mapsto W_i \bs{x} + \bs{b}_i$, for $i=1,\dots,L$. 
Here, $W_i \in \mathbb{R}^{r_i \times r_{i-1}}$ and $\bs{b}_i \in \mathbb{R}^{r_i}$ are the free parameters of the neural network, $r_i$ are the number of neurons in the $i^\text{th}$ layer, and $r_0 = m_I, r_L= m_O$.  The output of the $i^\text{th}$ layer of the network is given by $\mathcal{N}_i(\bs{x}):= W_i \phi(\mathcal{N}_{i-1}(\bs{x})) + \bs{b}_i$, $i \geq 2$, where $\phi$ acts component-wise. For an input $\bs{x} \in \mathbb{R}^{m_I}$, and letting $\mathcal{N}_1\bs{x} = W_1 \bs{x} + \bs{b}_1$, the output of the neural network, after iterating its $L$ layers,  is  $\mathcal{N}(\bs{x}) = \mathcal{N}_L(\bs{x})$. In this paper, we fix 
\[
\phi(x) = \text{ReLU}(x) = \left\{ \begin{array}{rcl} x, &&  x \geq 0, \\ 0 && x < 0 ,\end{array} \right.
\]
meaning that the tunable parameters of the considered neural networks are  $\Theta = (W_i,\bs{b}_i)_{i=1}^L$. We then write $\mathcal{N} = \mathcal{N}_\Theta$ to emphasise this dependency.

The aim of model training is to find a quadruple $\mathcal{X}:=(e,e^\partial,d,\mathcal{A})$ which minimises the modelling residual $\Psi(\mathcal{U},\mathcal{X})$ defined in \eqref{eq:cost_function}. To impose the positive definiteness constraint \eqref{eq:patch_elliptic_system} on $\mathcal{A}$, we use a logarithmic barrier function and instead solve the optimisation problem
\begin{equation} \label{eq:numerical_optimisation}
\begin{split}
\min_{\Theta,\mathcal{A}} \quad & \Psi(\mathcal{U},\mathcal{X}) - \rho \log{(\det{\mathcal{A}})}\\
                   \quad & \Theta = (\Theta_e,\Theta_{e^\partial}, \Theta_d), \\
                   \quad & \mathcal{A} \in \mathbb{R}^{dr \times dr}, \\
                   \quad & e = \tilde{e} \circ \mathcal{D}, \; e^\partial := \tilde{e}^\partial \circ \mathcal{D}^\partial, \; d := \mathcal{I} \circ \tilde d,\\
                   \; & \tilde{e} = \mathcal{N}_{\Theta_e}, \; \tilde{e}^\partial = \mathcal{N}_{\Theta_{e^\partial}}, \; d = \mathcal{N}_{\Theta_d}\\
                   \quad & \rho >0, \alpha >0. 
\end{split}
\end{equation}
The input and output dimensions of each neural network in \eqref{eq:numerical_optimisation} are prescribed by the encoder and decoder structure described in \S \ref{sec:num_interior_gf}, \S\ref{sec:num_bdry_gf}, and \S\ref{sec:num_decoder_gf}. The number of layers and neurons in each neural network is problem dependent and will be specified in \S\ref{sec:num_egs} for the particular numerical examples considered. The $-\log{(\det{\mathcal{A}})}$ term provides a {\em barrier} in the sense that $-\log{(\det{\mathcal{A}})} \rightarrow \infty$ as $\det{A} \rightarrow 0$, which penalises sign-changes of the eigenvalues of $\mathcal{A}$, and hence promotes  positive definiteness.

The optimisation problem \eqref{eq:numerical_optimisation} is solved using a stochastic gradient descent approach. At each iteration, a random set of training patches of the form described in \S \ref{sec:num_train_patch} are created. The gradient of cost function \eqref{eq:cost_function}, dependent upon the chosen training patches, is then computed using automatic differentiation an appropriate step in the decision variables is taken. The Adam
gradient-based optimization algorithm implemented in the \texttt{TensorFlow} package is used to identify a local minimum.
Since solutions to the elliptic system \eqref{eq:elliptic_system} are invariant upon rescaling of $\mathcal{A}$, after each iteration the elliptic decision variables are updated via $\mathcal{A} \mapsto \mathcal{A}/ (\text{det}(\mathcal{A}))^{(1/(rd)^2)}$ to maintain the value of the determinant to be unity. The process of random training patch selection and gradient-based weight updates is then iterated until the cost $\Psi$ has converged to a local minimum.

\section{Numerical Examples} \label{sec:num_egs}

We consider two nonlinear PDEs to demonstrate performance SINNs for solving boundary observation problems. 

\subsection{A nonlinear heat equation} \label{subsec:nl_heat}

Let $\Omega = [0,1]\times [0,1] \subset \mathbb{R}^2$ and suppose that $u(x,y)$ satisfies the PDE
\begin{equation} \label{eq:heat_nl}
\begin{split}
\nabla \cdot(e^{u} \nabla u) &= 0,\qquad  \text{in} \; \Omega,\\
u &= g, \qquad \text{on} \; \partial \Omega. 
\end{split}
\end{equation}
This can be viewed as a the steady solution of a nonlinear diffusion equation for which the diffusivity, $e^{u(x,y)}$, depends on  the local solution $u(x,y)$. An alternative view is that \eqref{eq:heat_nl} is equivalent to the nonlinear PDE $\Delta u =-|\nabla u|^2$. The motivation for studying this example is that the linearised PDE is simply the Laplace equation $\Delta u=0$, meaning that this example will facilitate a careful comparison of the SINN methodology to more traditional approaches which directly employ a system's linearised dynamics with an encoder/decoder architecture. 

Data for training and testing is obtained by first creating $10^3$ boundary functions $g_i \in L^2(\partial \Omega,\mathbb{R})$, with $N_T=900$ of these used for training and the remaining $100$ boundary functions used for testing.  The boundary functions $g_i$ are created as random sums of sinusoids. To describe this process, for any boundary point $\bs{z}=\bs{z}(x,y) \in  \partial \Omega$, let $\alpha(\bs{z})$ be the angle, measured anticlockwise, using a co-ordinate system with origin at the centre of the square domain $\Omega$, namely
\[
\alpha(\bs{z}(x,y)) = \text{atan2}\left(x-0.5, y-0.5 \right), \qquad \bs{z} \in \partial \Omega.
\]
By sampling coefficients $X_i,Y_i \sim N(0,1)$ from standard Normal distributions, we first let 
\[
\tilde{g}_i(\bs{z}) = \sum_{n=1}^{4} \frac{X_{n} \sin(n \alpha(\bs{z})) + Y_{n} \cos(n \alpha(\bs{z})) }{n}, \qquad \bs{z} \in \partial \Omega. 
\]
The cosine component creates a random phase shift, while higher-order sinusoids are moderately attenuated to encourage the lower frequency data. Each sampled boundary function is then normalized to define the final data boundary function $g_i = \tilde{g}_i / (\Delta_g)$ where $\Delta_g$ is randomly sampled from a triangular distribution with pdf 
\[
f(\Delta_g) = 
\begin{cases} 
\Delta_g /8 , & \text{if } 0 \leq \Delta_g \leq 4, \\
0, & \text{otherwise}.
\end{cases}
\]
This involved approach create an ensemble of boundary functions which have mostly large differences between their largest and smallest value.

\subsubsection{Numerical solution and SINN implementation} For each boundary data function $g_i \in L^2(\partial \Omega,\mathbb{R})$, the PDE \eqref{eq:heat_nl} is solved on a uniform grid $38 \times 38$ grid with a Newton Linearization Method with finite difference equations to obtain solution data $u_i \in L^2(\Omega,\mathbb{R})$. 
%
Given the computational domain discretisation, we view  $\Omega$ as the union of $38 \times 38$ square elements $\mathcal{T} = [1/38] \times [1/38]$, with any solution $u(x,y)$ to \eqref{eq:heat_nl} assumed to have a single value in each element. 

To define interior encoders, we let $E$ be a square, centred at the $(0,0) \in \mathbb{R}^2$, and formed of the union of $N_E = (2m_e+1)^2$ elements $\mathcal{T}$ for some $m_e \in \mathbb{N}$. In this way, the value of any interior latent variable at $(x,y) \in \Omega$ depends only on the solution values in the $(2m_e+1)^2$ elements symmetrically surrounding $(x,y) \in \Omega$. 

Boundary latent variables $\ell(\bs{z})$ are defined at the centre of each exterior element, using the construction described in Section \ref{sec:num_bdry_gf}, with $E_\partial$ chosen to be a line segment, centred at $0 \in \mathbb{R}$, and formed of $N_{E_\partial}=2m_e + 1$ line segments whose lengths are equal to the side length of the tile $\mathcal{T}$.  Consequently, boundary latent variables $\ell{(\bs{z})}$ depend on the boundary values $g(\bs{z})$ on the $2m_e+1$ tile boundaries  symmetrically surrounding $\bs{z} \in \partial \Omega$. Note that is has been assumed for simplicity that $N_E = N_{E_\partial}^2$. 

Decoders are created using the construction in \S\ref{sec:num_decoder_gf} by letting $D$ be a square, centred at $(0,0) \in \mathbb{R}^2$, form of the union of $N_D = (2m_d+1)^2$ tiles $\mathcal{T}$. Consequently, decoders seek to use latent values at a point $(x,y) \in \Omega$ to predict the solution   $u(x,y)$ on $N_D$ tiles symmetrically surrounding $(x,y)$. Since the decoder output shape is square the domain can be fully tiled by the decoder outputs, we use partition decoders as descibed in \S \ref{sec:decoder}.

%
As previously described the training was split into batches where each batch needs to contain samples from both the interior and domain boundary. Since the domain is square a distinction will also be made between edge training patches which do not include the corner points:
\[
(x_c,y_c) \in \left\{ (0,0),(0,1), (1,0), (1,1) \right\}
\]
and corner training patches which include a corner point. Each batch contained 128 internal samples, 128 edge samples and 32 corner samples which were equally divided between the 4 sides of the square. 

Finally, to implement the symmetric matrix $\mathcal{A}$ as an optimisation variable, we define matrices $P_{11},P_{12},P_{22} \in \mathbb{R}^{r \times r}$, let $A_{ij}:=P_{ij}+P_{ij}^\top$ and form the block matrix $\mathcal{A} = (A_{ij})_{ij=1}^2 \in \mathbb{R}^{4r^2}$. The corresponding elliptic PDE for a latent variable function $\bs{\ell} : \Omega \rightarrow \mathbb{R}^r$ is then given by
\[
D_{\mathcal{A}} \bs{\ell} = A_{11} \bs{l}_{xx} + 2 A_{12} \bs{l}_{xy} + A_{22} \bs{l}_{yy} = 0.
\]

We consider the performance of SINN models for two types of boundary data. In the first, for each boundary data function $g_i$,  we only assume that the boundary encoder can access pure boundary data  via 
\begin{equation} \label{eq:no_bdry_info}
\bs{b}_i(\bs{z}) = \left(g_i(\bs{z}),\bs{n}(\bs{z}) \right), \qquad \bs{z} \in \partial \Omega,
\end{equation}
where at the corners of the square domain, the boundary vector is defined diagonally in an outward pointing manner (e.g. $(-\frac{1}{\sqrt{2}},\frac{1}{\sqrt{2}})$ for the north-west corner). Following this, we  will also consider the case in which for each boundary data function $g_i$, the boundary derivative of its associated solution $u_i$ is available to the encoder, by letting
\begin{equation} \label{eq:bdry_info}
\bs{b}_i(\bs{z}) = \left(g_i(\bs{z}),\frac{\partial u_i}{\partial \bs{n}}(\bs{z}), \bs{n}(\bs{z}) \right), \qquad \bs{z} \in \partial \Omega,
\end{equation}
The code available for generating results is available at: https://github.com/jh6220/SINNs-for-boundary-observtion-problems.git

\subsubsection{SINN performance with pure boundary data \eqref{eq:no_bdry_info}}

Table \ref{tab: ResultsLaplaceStandardBC} shows, for different choices of the latent space dimension $r$ and the encoder and decoder complexities $N_E,N_D$, the mean square error between the $N_\text{test}$ test boundary functions solved by the SINN model $\mathcal{F}$ and ground truth is 
\[
\mathcal{E} = \frac{1}{N_\text{test}} \sum_{i=1}^{N_\text{test}} \|u_i-\mathcal{F}(\bs{b}_i) \|^2_{L^2(\Omega,\mathbb{R})}
\]
In the above equation, integrals are interpreted as sums over the $38^2$ square elements comprising the domain $\Omega$. The results in Table \ref{tab: ResultsLaplaceStandardBC} use SINNs in which each component (encoder, decoder, boundary encoder) is parameterised using a neural network with $5$ hidden layers of $60$ nodes.

\begin{table}[!ht]
\small
\caption{Mean square error $\mathcal{E}$ of SINNs using pure boundary data $\bs{b} \in (u_{|\partial \Omega},\bs{n}_{|\partial \Omega})$) for the nonlinear heat equation \eqref{eq:heat_nl}.}
\centering
\resizebox{1\columnwidth}{!}{
\begin{tabular}{c c c c cc c c c} 
\toprule
$r$   & 1     & 2     & 3   &8& 2    & 3     & 5 & 8      \\
$N_{E}$  & 1     & 3     & 3  &3& 3   & 3     & 3 & 3      \\
$N_{D}$ & 1     & 1     & 1   &1& 3  & 3     & 3  & 3    \\
 $r/N_{D}^2$& 1& 2& 3& 8& 2/9& 1/3& 5/9&8/9\\
\midrule
$\mathcal{E}$ & \num{4.54e-3}& \num{7.38e-4} & \num{5.3e-4}   &\num{6.56e-5}& \num{2.18e-2} & \num{8.14e-4} & \num{6.74e-4} & \num{5.29e-4} \\ 
\bottomrule
\end{tabular}
}
\label{tab: ResultsLaplaceStandardBC}
\end{table}

It is evident that increasing either the latent dimension $r$ or the encoder complexity $N_E$ reduces the SINN error $\mathcal{E}$. The former allows for a more complex latent space, while the latter effectively allows the encoder to access higher order derivatives of the underlying data. For example, when $N_E=3$ an encoder has access to nine local function values and is therefore has the potential to access approximate second-order derivatives.  Conversely, increasing the decoder dimension $N_D$ increases the error $\mathcal{E}$ which occurs due to the choice of partition decoder used for this example. In this case, a increasing $N_D$ corresponds to requiring the decoder to extrapolate to a larger sets, naturally increasing $\mathcal{E}$. However, if the number of degrees of freedom, i.e. $r/N_D^2$, of a trained SINN are considered it can be seen from the penultimate row of Table \ref{tab: ResultsLaplaceStandardBC} that a higher value of $N_D$ can possibly be viewed as computationally advantageous. 

To discuss the influence of latent variable dimension $r$,  we consider the trained internal elliptic models $D_\mathcal{A}$ in two cases. In the simplest case $(r=1,N_E=1,N_D=1)$, the internal elliptic model is 
\[
1.608 \, \ell_{xx}+ 0.001 \, \ell_{xy}+ 1.613 \,  \ell_{yy}=0,
\]
which is very close to the linearised PDE $\Delta u=0$. On the other hand, when more modelling degrees of freedom are available for the case $(r=3,N_E=3,N_D=1)$, the trained internal elliptic system 
\[
    \left(\begin{smallmatrix}
        1.46 & 0.49 & -1.14 \\
        0.49 & 1.57 & -0.72 \\
        -1.14 & -0.72 & 0.36 
    \end{smallmatrix}\right) \ell_{xx}   +
    \left(\begin{smallmatrix}
        0.01 & -0.03 & -0.02 \\
        -0.03 & 0.02 & -0.02 \\
        -0.02 & -0.02 & 0.06 
    \end{smallmatrix}\right) \ell_{xy}  + 
    \left(\begin{smallmatrix}
        0.92 & 1.66 & -1.30 \\
        1.66 & 0.74 & -1.58 \\
        -1.30 & -1.58 & 1.69 
    \end{smallmatrix}\right) \ell_{yy} = 0
\]
of the SINN is non-trivial, and the resulting error $\mathcal{E}$ is an order of magnitude lower than that of the simplest model.

\begin{figure}
\centering
\subfigure[$\epsilon u$]{
\includegraphics[width=0.3\textwidth,clip=true,trim=0cm 0cm 0cm 0cm]{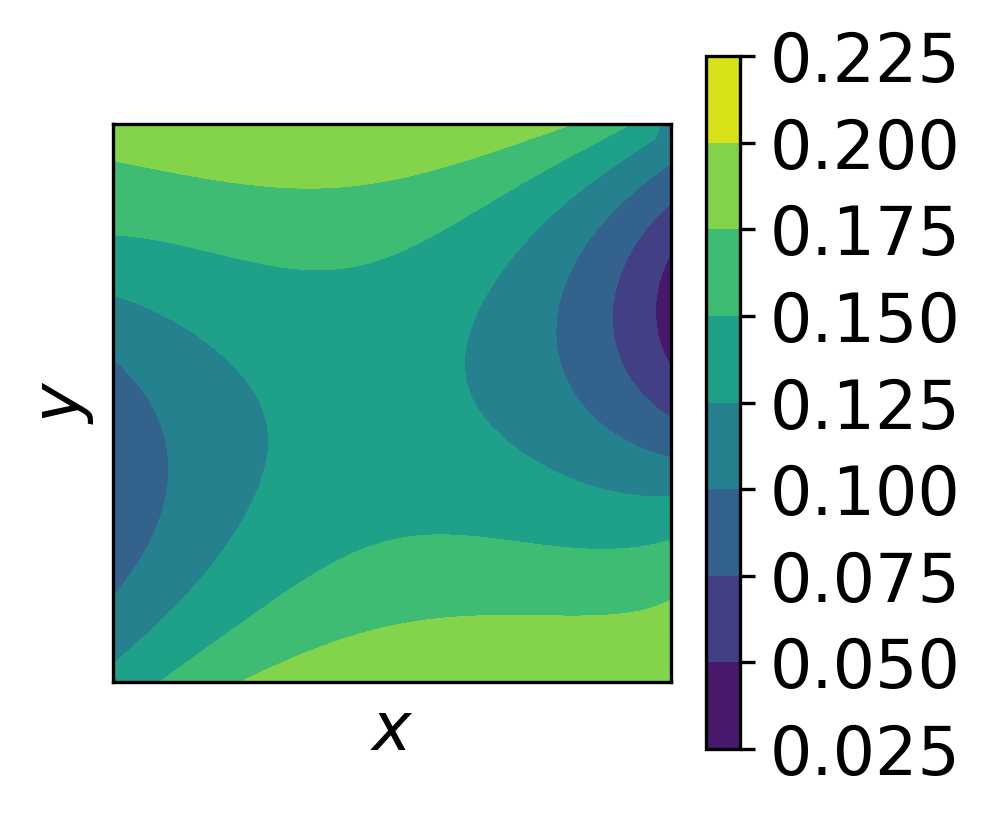}   
}
\subfigure[$\left(\mathcal{E}_\mathcal{A} \circ \epsilon^{\partial} \right)\bs{b}$]{
\includegraphics[width=0.3\textwidth,clip=true,trim=0cm 0cm 0cm 0cm]{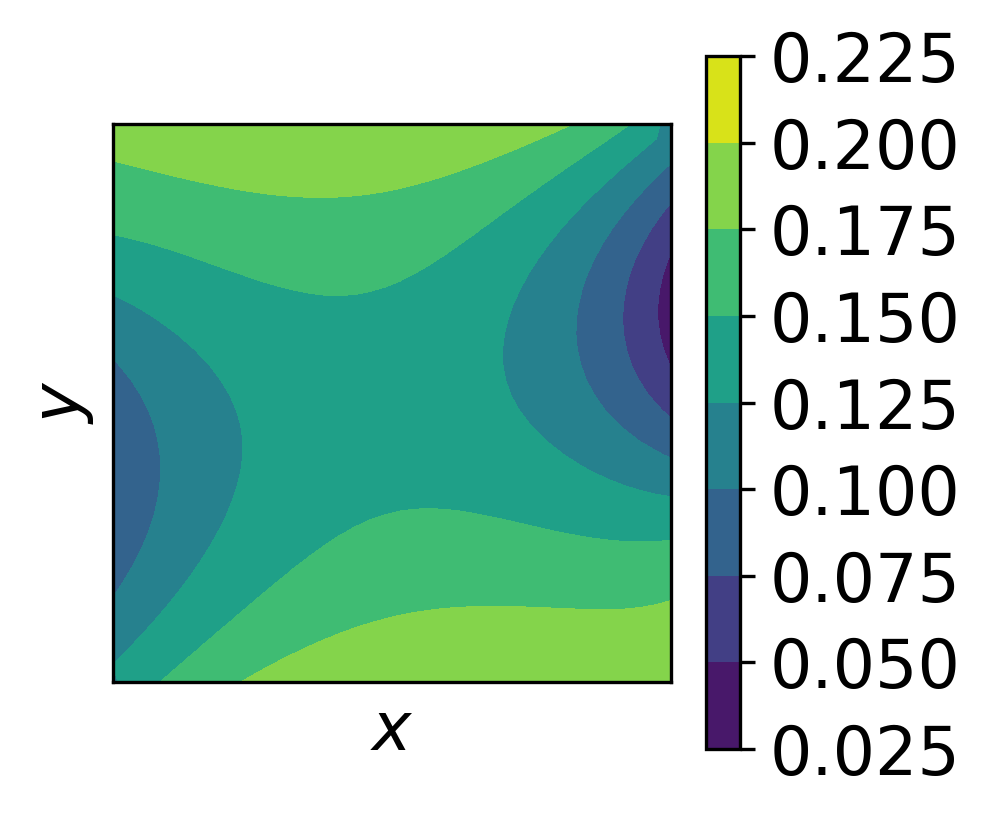}   
}
\subfigure[$\epsilon u-\left(\mathcal{E}_\mathcal{A} \circ \epsilon^{\partial} \right)\bs{b}$]{
\includegraphics[width=0.32\textwidth,clip=true,trim=0cm 0cm 0cm 0cm]{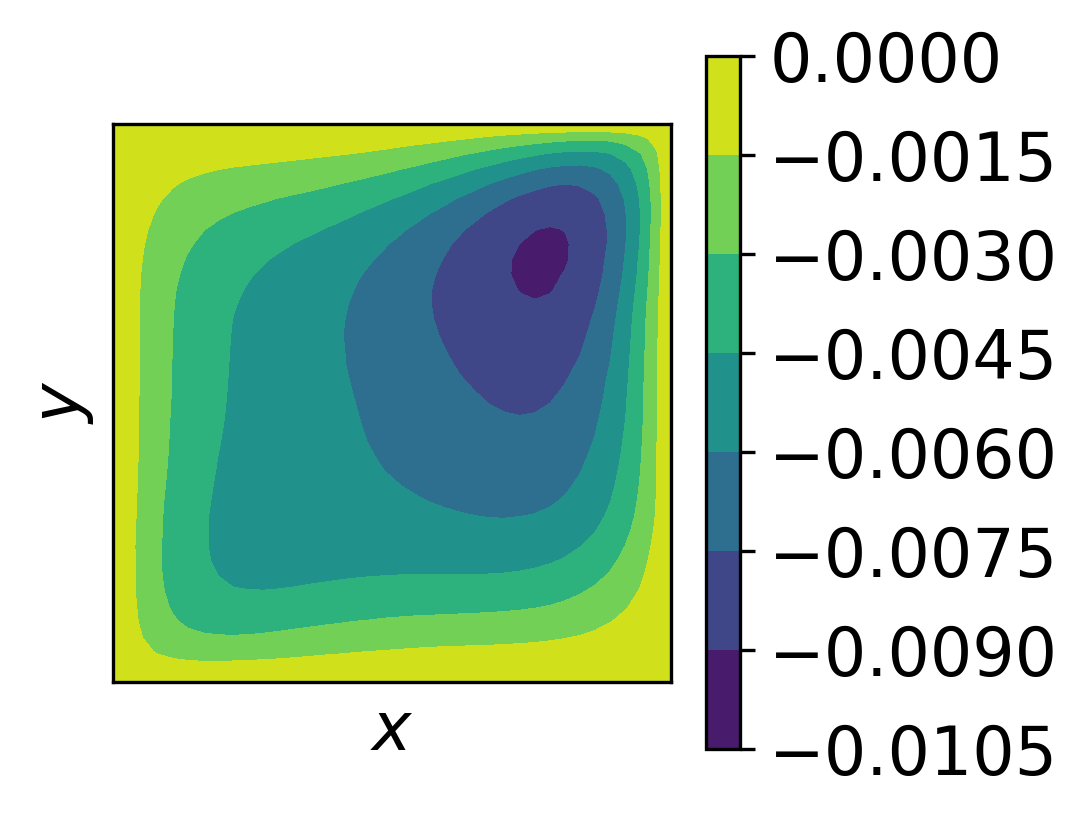}   
}
\\
\subfigure[$u$]{
\includegraphics[width=0.3\textwidth,clip=true,trim=0cm 0cm 0cm 0cm]{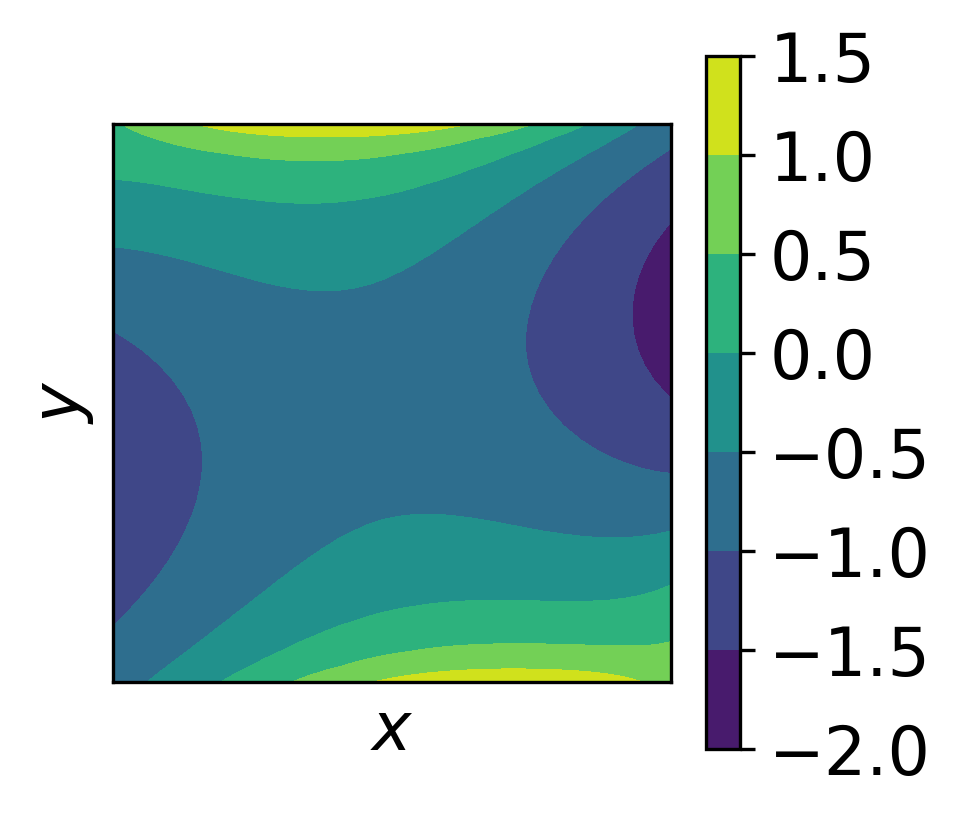}   
}
\subfigure[$\mathcal{F}(\bs{b})$]{
\includegraphics[width=0.3\textwidth,clip=true,trim=0cm 0cm 0cm 0cm]{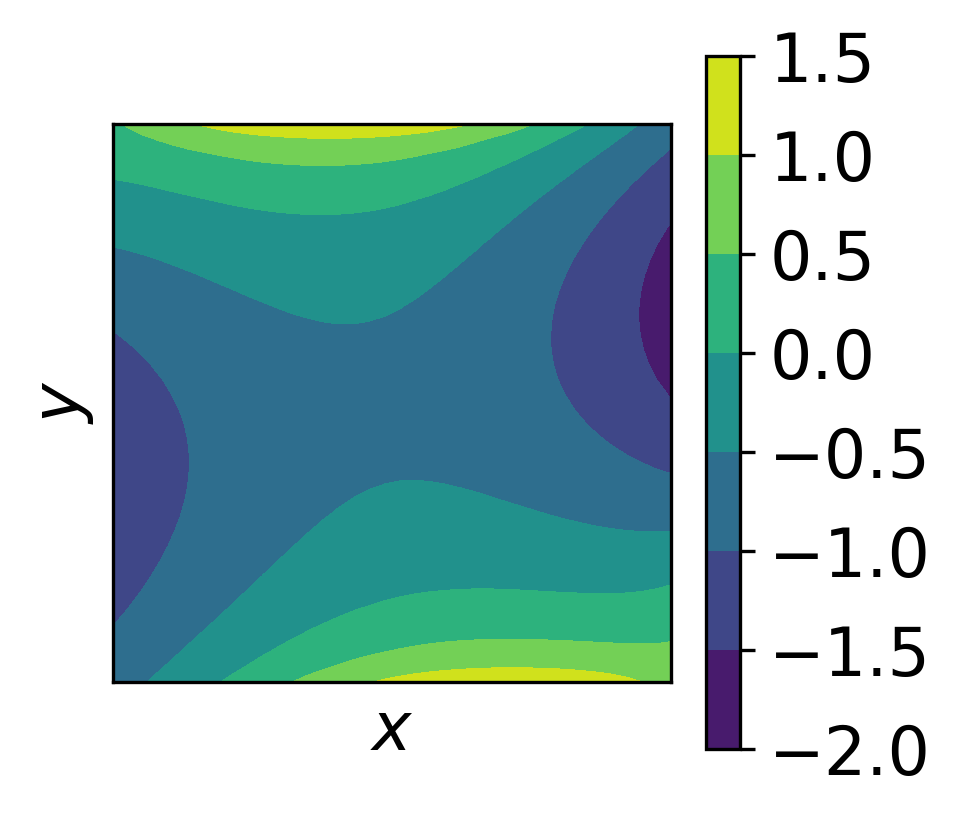}   
}
\subfigure[$\mathcal{F}(\bs{b})-u$]{
\includegraphics[width=0.32\textwidth,clip=true,trim=0cm 0cm 0cm 0cm]{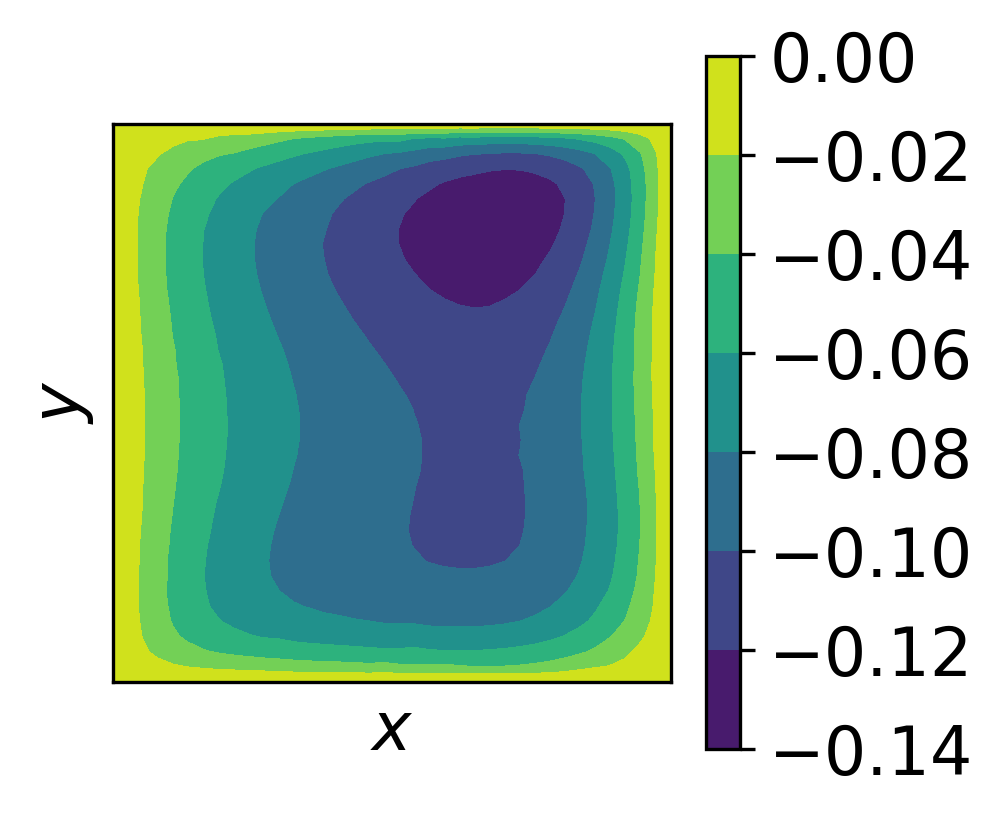}   
}
\caption{Indicative example of SINN performance with $(r=1,N_E=1,N_D=1)$. (a) shows the encoded latent variable $\epsilon u$; (b) the solved latent variable $\left(\mathcal{E}_\mathcal{A} \circ \epsilon^{\partial} \right)\bs{b}$; (c) latent variable error; (d) original data $u$; (e) SINN solution; (f) SINN error. The SINN error for this data pair is comparable to the mean value reported in Table \ref{tab: ResultsLaplaceStandardBC}}
\label{fig:Model111_contourplot}
\end{figure}


We next discuss the structure of the trained encoder and boundary encoders. For the simplest case $(r=1,N_E=1,N_D=1)$, Figure \ref{fig:Model111_contourplot} visualises, for a selected test data pair $(u,\bs{b})$, the SINN solution $\mathcal{F}(\bs{b})$ computed using the boundary data in (e); the encoded latent variable $\epsilon u$ in (a); and the reconstructed latent variable $(\mathcal{E}_\mathcal{A} \circ \epsilon^\partial) \bs{b}$ in (b). The error field of both the SINN and the internal latent variables are also shown in Figure \ref{fig:Model111_contourplot}~(c),(f) and these show small, but non-trivial, discrepancies.

Since the trained PDE is approximately equivalent the linearised PDE, and the latent variable space has scalar-values $(r=1)$, it is not surprising that the encoded latent variable $\epsilon u$ and the original data $u$ are superficially similar. Model accuracy in this case is achieved purely from the nonlinearity of the boundary encoder and decoder. Indeed, Figure \ref{fig:Model111_sliceplot}~(a--b) shows two slices of the SINN solution $\mathcal{F}(\bs{b})$, ground-truth data $u$, and the solution obtained by extending the boundary data using just the linearised PDE $\Delta u=0$. It is clear that the nonlinear SINN solution is substantially more accurate than the linearised model. To understand the precise way in which the nonlinear structure achieves this increase in accuracy, Figure  \ref{fig:Model111_sliceplot}~(c) shows both the true boundary data $\bs{b}$ and the encoded data $\epsilon^\partial \bs{b}$. The encoder $\epsilon^\partial$ behaves assymetricaly in the sense that it attenuates positive boundary values and amplifies negative ones. The reason for this behaviour is that the local diffusion coefficient $e^{u(x,y)}$ of the PDE \eqref{eq:heat_nl} increases exponentially with the $u(x,y)$. Thus, positive boundary values imply diffusion on shorter length scales compared to negative boundary values. The boundary encoder's behaviour can now be interpreted as $\epsilon^\partial$ reflecting this local nonlinear structure of the underlying nonlinear diffusion coefficient, and this improves the accuracy of the SINN operator $\mathcal{F}$. While this behaviour of the  boundary encoder is now interpretable, there is a persistent error if a simple model with $r=1$ is used. As indicated in Table \ref{tab: ResultsLaplaceStandardBC}, this error can be avoided by increasing the dimension $r$ of the latent space.  

\begin{figure}
\centering
\subfigure[]{
\includegraphics[width=0.30\textwidth,clip=true,trim=0cm 0cm 0cm 0cm]{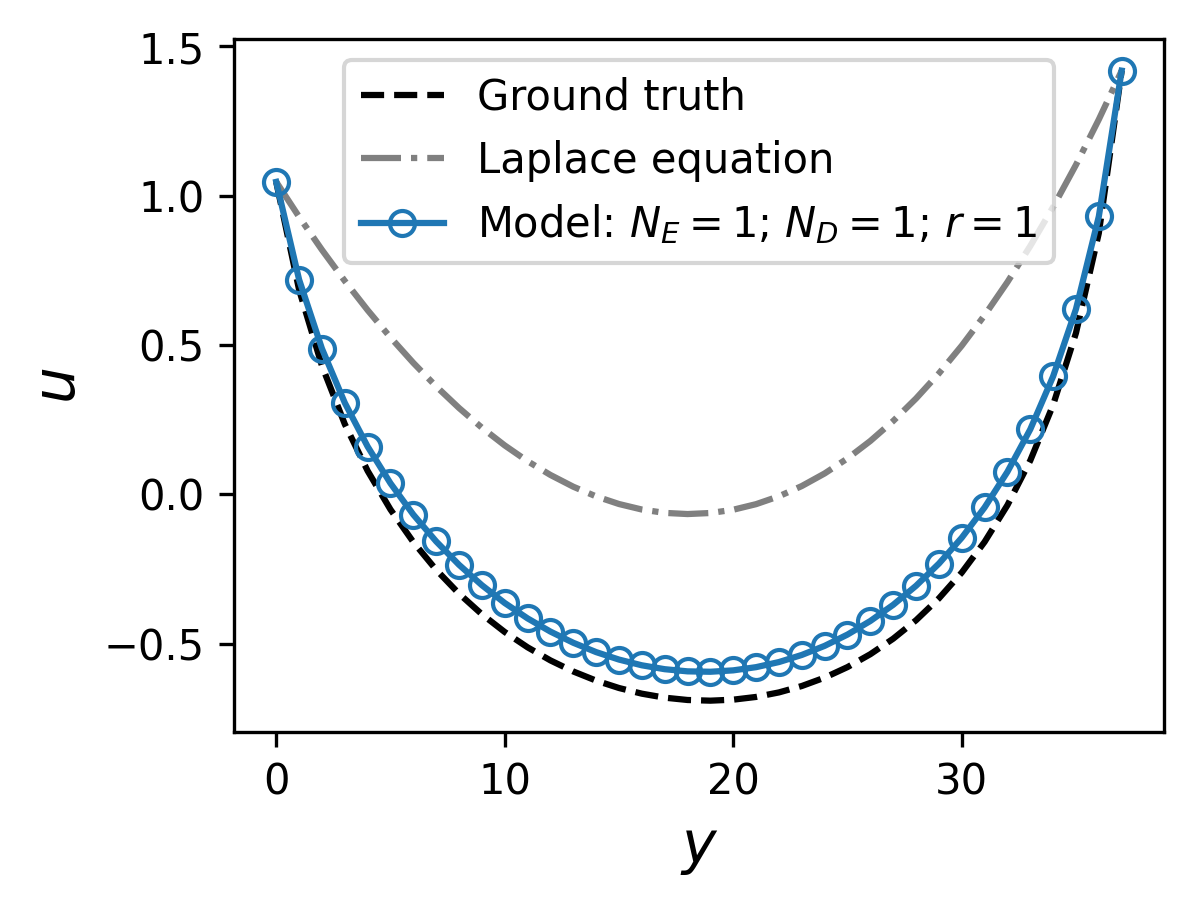}   
}
\subfigure[]{
\includegraphics[width=0.30\textwidth,clip=true,trim=0cm 0cm 0cm 0cm]{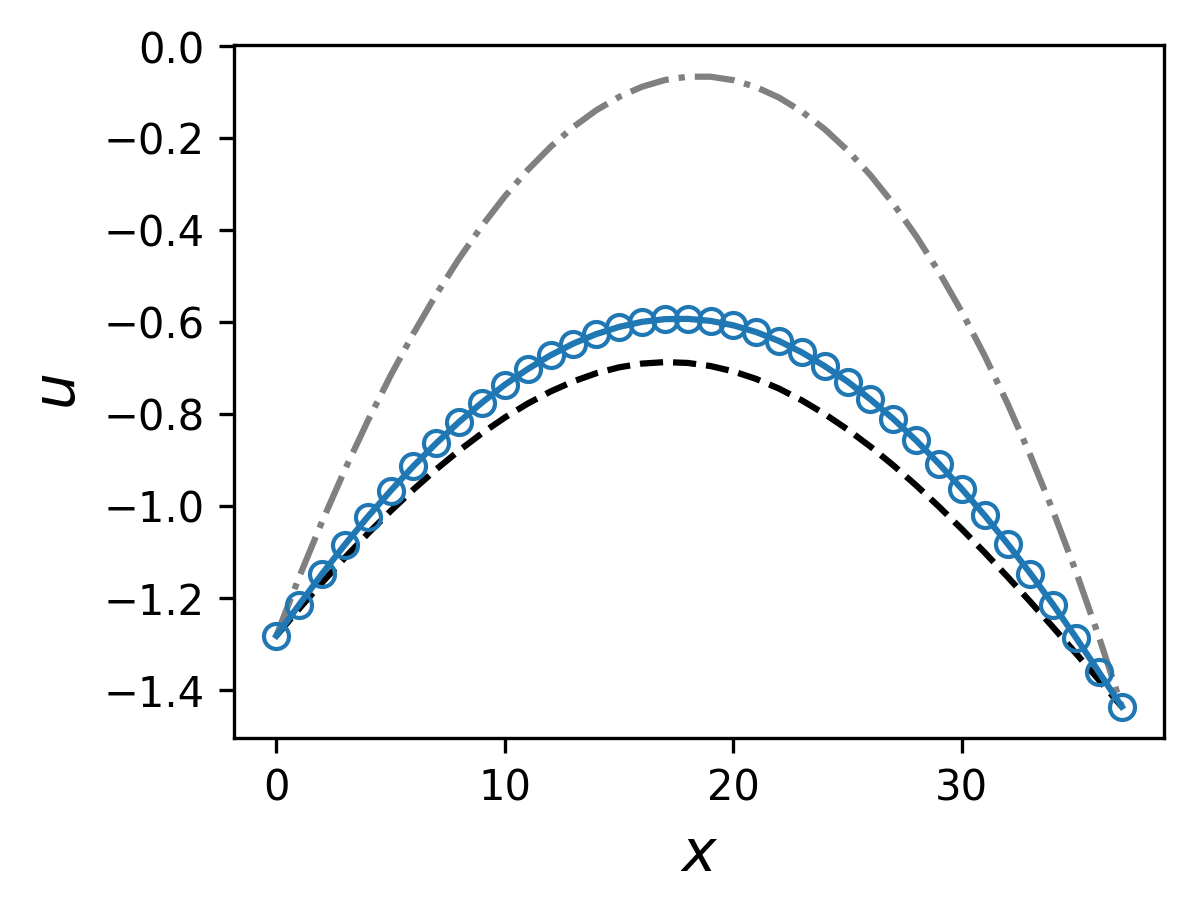}   
}
\subfigure[]{
\includegraphics[width=0.30\textwidth,clip=true,trim=0cm 0cm 0cm 0cm]{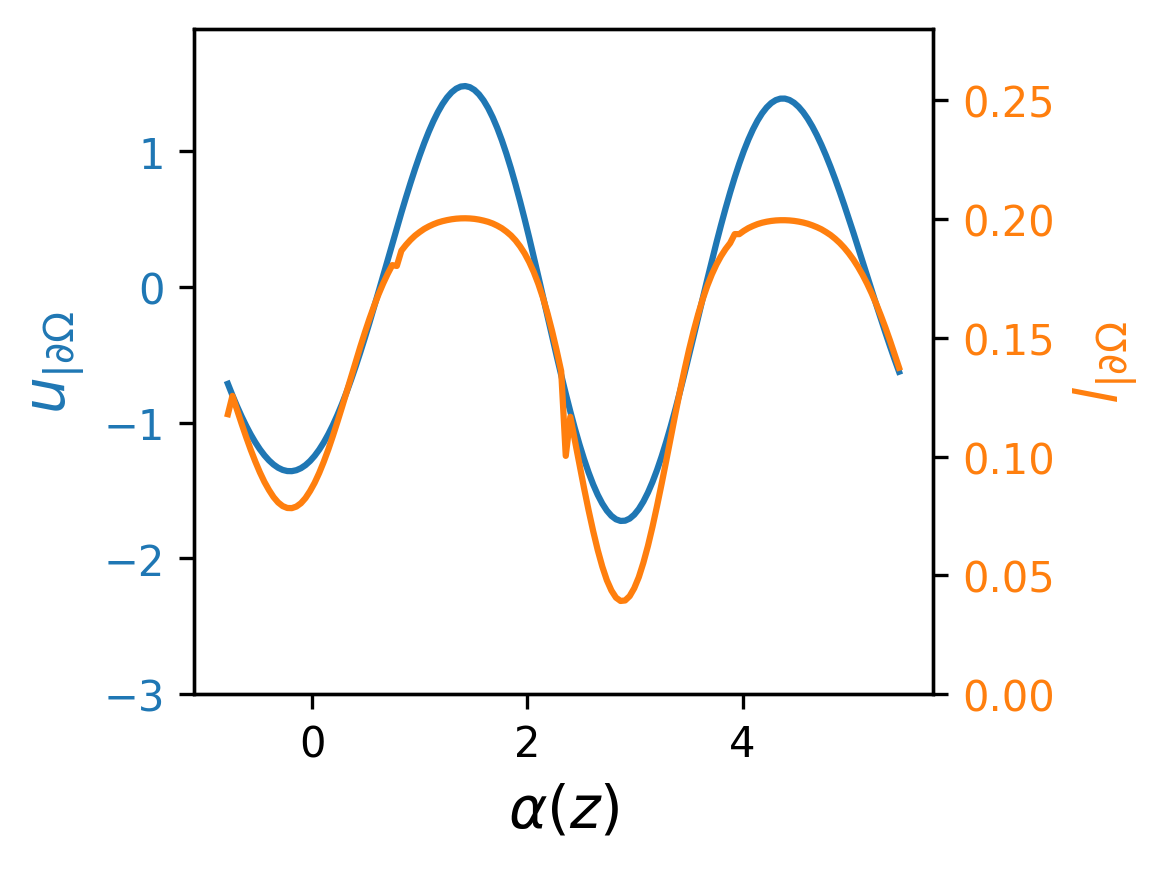}   
}
\caption{Indicative example of SINN performance for $(r=1,N_E=1,N_D=1)$. (a) shows solutions on a slice through the domain at $x=0.5$; (b) the respective variables on the slice $y=0.5$; (c) Original and encoded boundary data.}
\label{fig:Model111_sliceplot}
\end{figure}


Finally, we seek to understand the geometric properties of the latent variables for cases in which $r=3$ and $r=5$. While the latent variables do not have an strict physical meaning, one can use sensitive analysis to extract their underlying structure. In particular, Figure \ref{fig:StandardBC_sensitivity} shows 
\[
\frac{\partial \delta}{\partial \bs{\ell}} (\bar{\bs{\ell}})
\]
where $\bar{\ell}$ is the mean value of the latent variables computed across the entire data ensemble. The gradient was computed numerically using central finite difference method with a step size equal to a standard deviation of each latent dimension computed from the data-set in a similar manner to the mean.

\begin{figure}
\centering
\includegraphics[width=1\textwidth,clip=true,trim=0cm 0cm 0cm 0cm]{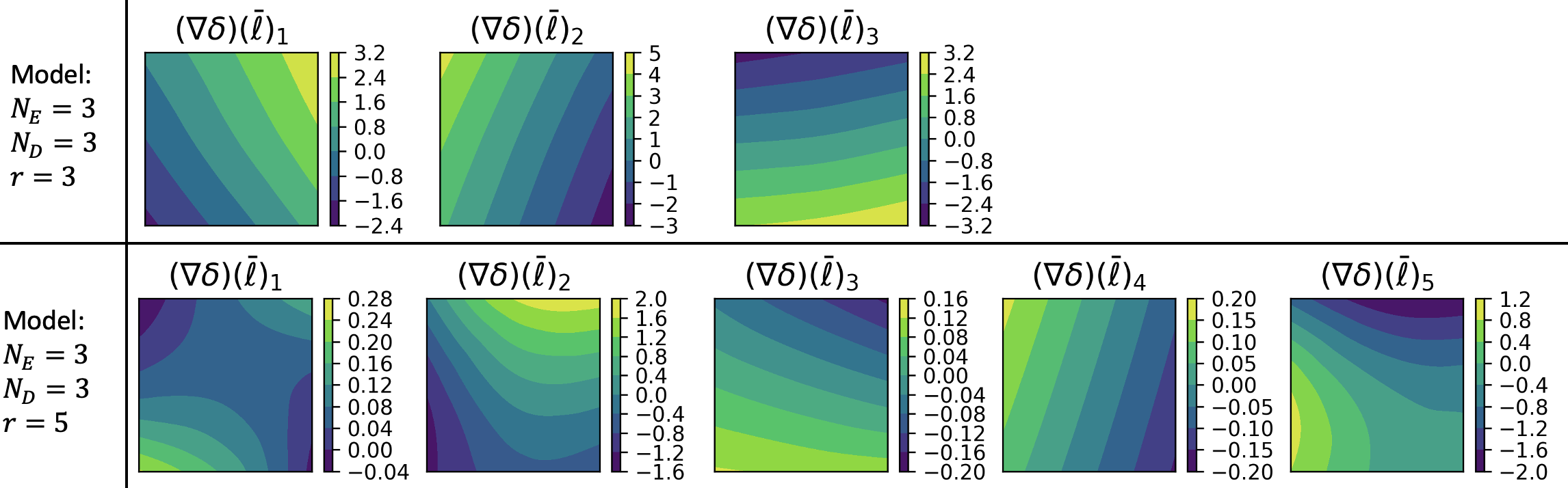}    
\caption{Latent dimensions interpretation using sensitivity analysis for models with pure boundary data $\bs{b} \in (u_{|\partial \Omega},\bs{n}_{|\partial \Omega})$). The figure shows the gradients $(\nabla \delta)(\bar{\bs{\ell}})_{i} = \frac{\partial \delta}{\partial \bs{\ell}_{i}}(\overline{\bs{\ell}})$ interpolated onto a fine grid}
\label{fig:StandardBC_sensitivity}
\end{figure}

The results of the sensitivity analysis for two with $(r=3,N_E=3,N_D=3)$ and $(r=5,N_E=3,N_D=3)$ are shown in Figure \ref{fig:StandardBC_sensitivity}. It can be observed that the implies latent variables are spatially coherent. In the case $r=3$, the structures are approximately orthogonal linear surfaces, while for $r=5$ more complex, yet still coherent, spatial structures can be observed. 

\subsubsection{SINN performance with extended boundary data \eqref{eq:bdry_info}}

We now consider the case in which extra boundary data, namely the normal boundary derivatives, are available to the boundary encoder. Table \ref{tab: ResultsLaplaceStandardBC} shows SINN errors $\mathcal{E}$ for a variety of model choices. These follow a similar trend to the case of pure boundary conditions, although the final row of Table \ref{tab: ResultsFluidExtraBC} indicates that extra boundary information gives a consistent performance improvement, and that this improvement is more pronounced for higher values of $N_D$ and of $r$.

\begin{table}[!ht]
\small
\caption{SINN errors $\mathcal{E}$ when extra boundary information ($\bs{b} = (u_{|\partial \Omega},\bs{n}_{|\partial \Omega}, \frac{\partial u}{\partial \bs{n}}_{|\partial \Omega})$) is used.}
\centering
\resizebox{\columnwidth}{!}{
\begin{tabular}{c c c c c c c}
\toprule
$r$                & 2     & 3 & 8  & 2    & 3     & 5      \\
$N_{E}$      & 3     & 3 & 3 & 3   & 3     & 3      \\
$N_{D}$      & 1     & 1 & 1  & 3  & 3     & 3      \\ 
\midrule
 $\mathcal{E}$ & \num{6.63e-4} & \num{2.50e-4} & \num{1.88e-5}  & \num{7.3e-3} & \num{3.41e-4} & \num{1.43e-4} \\ 
\hline
  $\mathcal{E} / \mathcal{E}_{\text{pure BCs}}$ & 1.11& 2.11& 3.49& 1.78& 2.37&4.68\\
\end{tabular}
}
\label{tab: ResultsLaplaceExtraBC}
\end{table}

Figure \ref{fig:ExtraBC_sliceplots} shows, for an  test function $u$ whose error is indicative of the mean values presented in Table \ref{tab: ResultsFluidExtraBC}, slices thought the true solution and SINN solutions. Both slices at $x=0$ and $y=0$ in Figure \ref{fig:ExtraBC_sliceplots}~(a--b) show very good agreement of the SINN solution $\mathcal{F}(\bs{b})$ with the true solution $u$. Consider first the SINN models with $N_D=1$, whose decoders are not required to extrapolate. It can be seen in both error plots of Figure \ref{fig:ExtraBC_sliceplots}~(c--d) that the pointwise error decreases uniformly as $r$ increases. Next, consider the SINN models with $N_D=3$, whose decoders must extrapolate to two adjacent elements. In this case, while absolute error also decreases with increasing $r$, the error plots are oscillatory as a result of the extrapolation error involved with the chosen partition decoder.

%


\begin{figure}
\centering
\subfigure[]{
\includegraphics[width=0.45\textwidth,clip=true,trim=0cm 0cm 0cm 0cm]{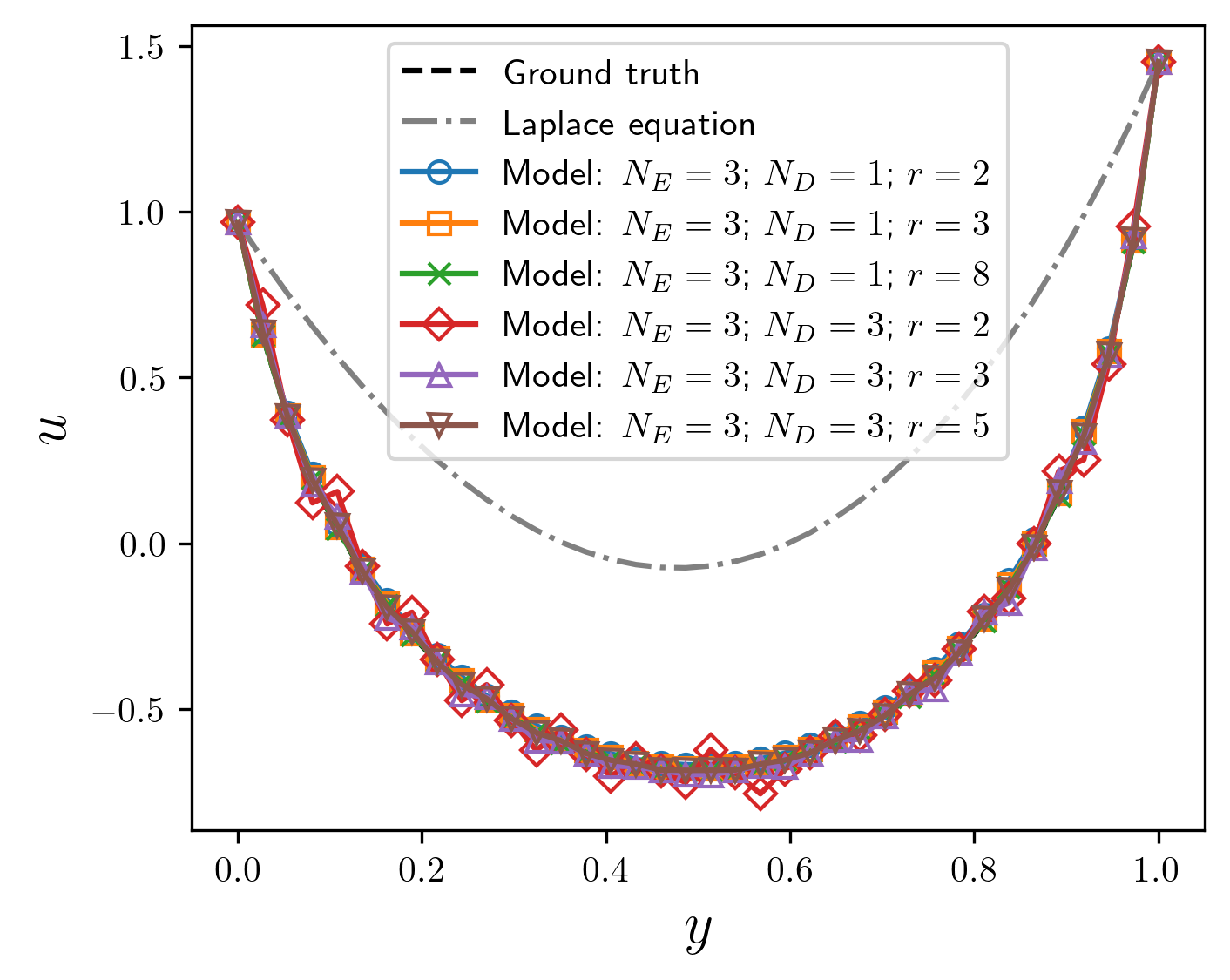}   
}
\subfigure[]{
\includegraphics[width=0.45\textwidth,clip=true,trim=0cm 0cm 0cm 0cm]{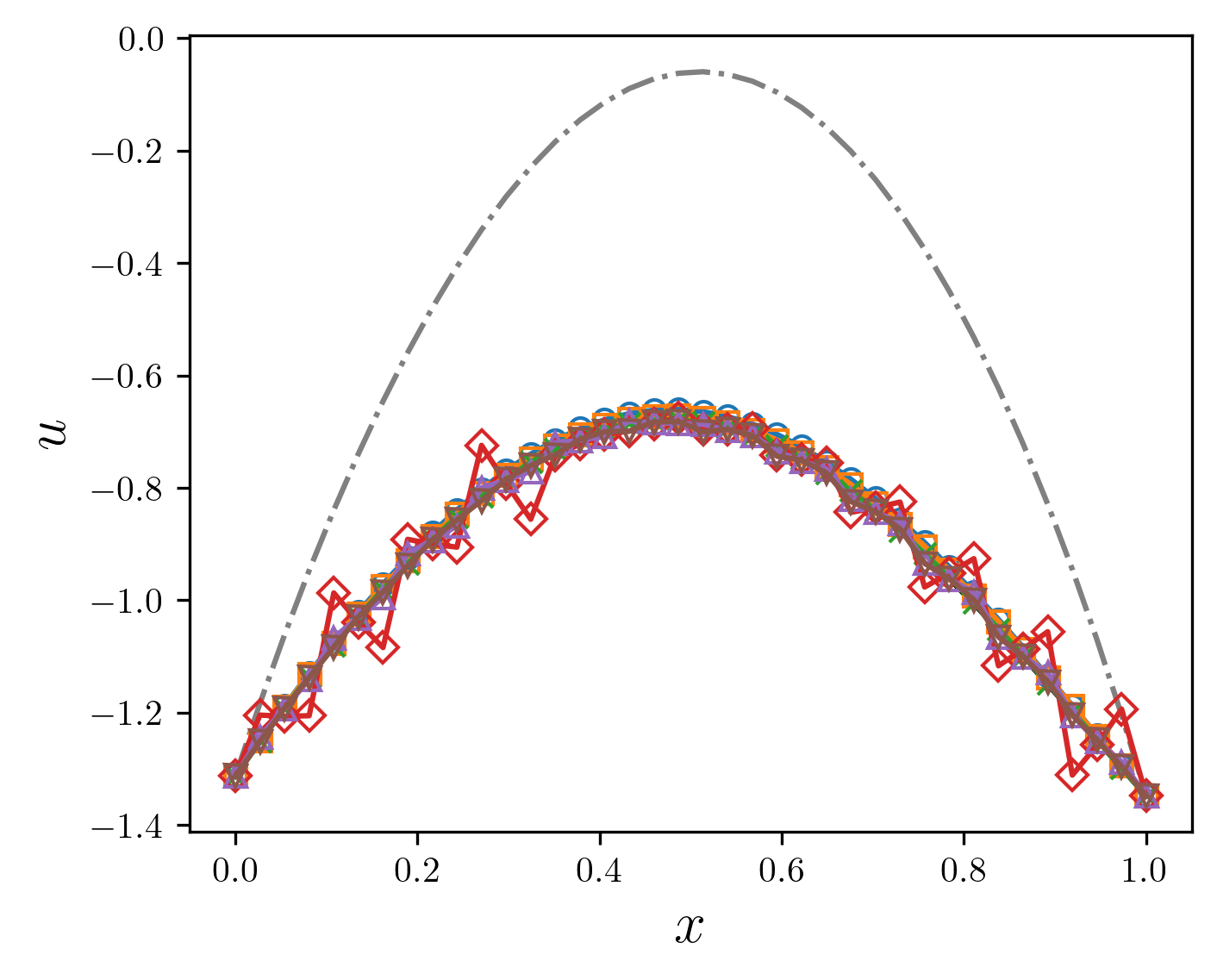}   
}
\\
\subfigure[]{
\includegraphics[width=0.45\textwidth,clip=true,trim=0cm 0cm 0cm 0cm]{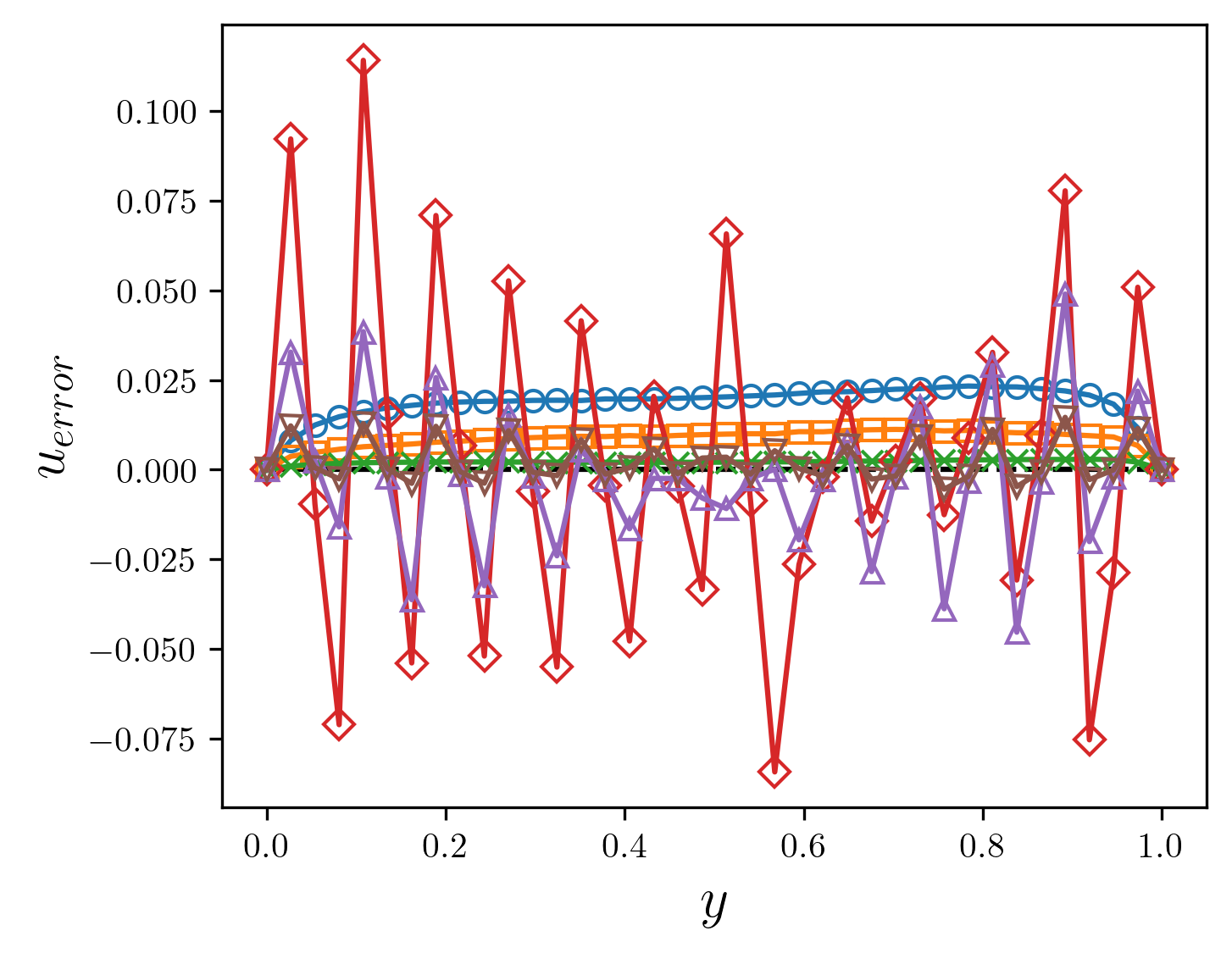}   
}
\subfigure[]{
\includegraphics[width=0.45\textwidth,clip=true,trim=0cm 0cm 0cm 0cm]{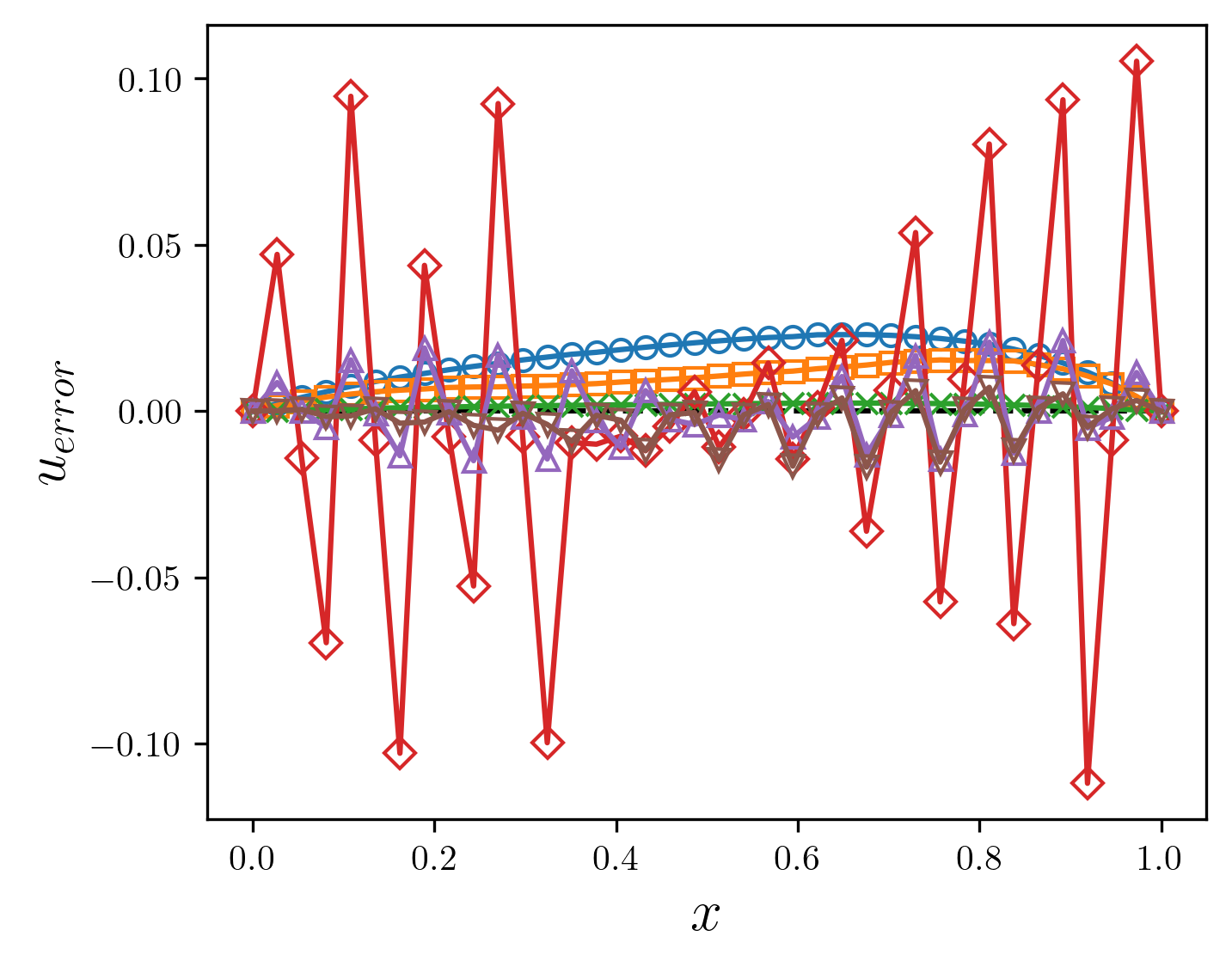}
}
\caption{An indicative example of SINN performance for the parameters shown in Table \ref{tab: ResultsLaplaceExtraBC}. Results are plotted along slices through the domain where $x=0.5$ (left) and $y=0.5$ (right). The upper plots (a-b) show the test data $u$ and SINN reconstructions $\mathcal{F}(\bs{b})$; the lower plots (c--d) show errors $u_\text{error} = u - \mathcal{F}(\bs{b})$.}
\label{fig:ExtraBC_sliceplots}
\end{figure}


Finally, we discuss the influence of underlying neural network complexity on the SINN error. For the case $(r=5,N_E=3,N_D=3)$, Table \ref{tab: ResultsLaplaceParameterSize} shows the average SINN error $\mathcal{E}$ over the testing ensemble for different choices of Neural Network dimensions. Three different neural network structures are considered, in terms of the number of layers and nodes per layer, with each case applying to the encoder, boundary encoder and decoder.  It is interesting to note that $\mathcal{E}$ decreases with neural network complexity, suggesting the model over fitting has not occurred for these parametric values. This highlights a potential benefit of the SINN methodology in that, due to the use of training patches, significant training information can be obtained from each training data pair $(\bs{u}_i,\bs{b}_i)$. Consequently, the SINN approach appears robust to overfitting, even when employing only a relatively small training data ensemble.

\begin{table}[!ht]
\small
\caption{SINN error $\mathcal{E}$ as a function of the neural network parameters for the case $(r=5,N_E=3,N_D=3)$.}
\centering
\resizebox{0.7\columnwidth}{!}{
\begin{tabular}{c c c c} 
\toprule

$N_\text{layers}$            & 4     & 5     & 5      \\
$N_\text{nodes}$     & 40     & 60     & 200      \\
$N_\text{parameters}$ & \num{5.53e3}     & \num{1.55e4}     & \num{1.64e5}      \\
\midrule
$\mathcal{E}$& \num{1.03e-3} & \num{1.43e-4} & \num{2.34e-5} \\ 
\hline
\end{tabular}
}
\label{tab: ResultsLaplaceParameterSize}
\end{table}

\subsection{Steady laminar fluid flow}

Let $\Omega = [0,1] \times [0,1] \subset \mathbb{R}^2$ and suppose that $\bs{u} : \Omega \rightarrow \mathbb{R}^2$ and $p:\Omega \rightarrow \mathbb{R}$ satisfy the steady, incompressible, Navier-Stokes equations 
\begin{align}
\begin{split}
\bs{u} \cdot \nabla \bs{u} + \nabla p&=  \nu \Delta \bs{u}, \qquad \text{in} \; \Omega\\
\nabla \cdot \bs{u} &= 0, \phantom{\Delta u} \qquad\; \text{in} \; \Omega,\\
\bs{u} &= \bs{g},\phantom{\Delta u} \qquad \; \text{on} \; \partial \Omega. 
\end{split} \label{eq:LaminarFLow}
\end{align}
Here, $\bs{u} = (u_x,u_y)$ represents the velocity compents of a fluid contained in the square domain $\Omega$, $p$ is the pressure of the fluid, and $\nu >0$ is the kinematic viscosity of the fluid. In this example, $\Omega$ should be thought of a control volume in a larger fluid flow. The corresponding velocity boundary conditions must, by the divergence theorem and incompressibility, then satisfy 
\begin{equation} \label{eq:boundary_data_restriction}
\int_{\partial \Omega} \bs{g} \cdot \bs{n} \, dS = \int_{\Omega} \nabla \cdot \bs{u} \, dV =0.  
\end{equation}

The aim of this example is to study whether SINNs can reconstruct the fluid velocity in the domain $\Omega$, using only boundary velocity data. This presents a more complex and challenging example than the nonlinear heat equation in the previous section, in view of the three-dimensional state space comprising of two velocity components and the pressure. Furthermore, to increase the challenge presented by this example, we will not seek to exploit any pressure information, meaning that its influence must be automatically discovered during SINN training. We note that, although we do not consider such an example here, the SINN methodology could analogously be applied to a fluid flow example with no-slip (i.e. Dirichlet) boundary conditions in which the interior fluid velocity must be recovered using only boundary pressure data.  

A data ensemble is created using a similar method to that in \S\ref{subsec:nl_heat}. Here,  $2\times 10^3$ random, sinusoidal, boundary velocity functions $\bs{g}_i = ((u_x)_i,(u_y)_i)$ are generated, each of which also satisfies the constraint \eqref{eq:boundary_data_restriction}. For each boundary data function, the PDE \eqref{eq:LaminarFLow} was solved using a SIMPLE algorithm on a staggered grid, with the domain $\Omega$ discredited into $30 \times 30$ rectangular cells, with the pressure $p$ computed at each grid centre, and the velocity components $u_x,u_y$ computed at the mid-point of the cell's sides. The chosen grid was non-uniform with greater resolution near the boundaries to aid numerical convergence. The resulting ``ground-truth'' solutions were then re-sampled onto a $38 \times 38$ uniform grid with the same properties as in \S\ref{sec:num_bdry_gf}. 


The generating functions and training loop was implemented in an equivalent manner to the example in \S\ref{subsec:nl_heat}. The only difference for this example is that encoders and boundary encoders take the two velocity components  $u_x, u_y$ as inputs. We note again, that pressure information is not available during training. Here, for brevity, we focus on case in which extra boundary information $\bs{b} =  (\bs{u}_{|\partial \Omega}=\bs{g}, \frac{\partial \bs{u}}{\partial \bs{n}}_{|\partial \Omega},\bs{n})$ is available.  Each trained SINN model this section has the same architecture for generating functions $\epsilon$, $\epsilon^{\partial}$ and uses a partition decoder $\delta$. The models with $N_D=1$ have $5$ hidden layers of $60$ nodes, and models $N_D=3$ have $5$ hidden layers of $200$ nodes to compensate for the higher dimensional decoder output.

Table \ref{tab: ResultsFluidExtraBC} shows the testing errors $\mathcal{E}$ of SINN models created with a selected parameter values. The errors exhibit the same trends as observed for the nonlinear heat equation in \S\ref{subsec:nl_heat}, with modelling error decreasing with increasing latent space dimension $r$ and encoder dimension $N_E$, and with errors increasing as the extrapolation dimension $N_D$ of the partition decoder is increased. The final row of Table \ref{tab: ResultsFluidExtraBC} shows the increase in error if only standard boundary conditions $(\bs{u}_{|_{\partial \Omega}}, \bs{n})$ are available in model training although, for brevity, we do not discuss these results in detail here.

\begin{table}[!ht]
\small
\caption{Mean square testing error $\mathcal{E}$ of SINNs using boundary information  $\bs{b} =  (\bs{u}_{|\partial \Omega}, \frac{\partial \bs{u}}{\partial \bs{n}}_{|\partial \Omega},\bs{n})$ to solve the PDE \eqref{eq:LaminarFLow}.}
\centering
\resizebox{\columnwidth}{!}{%
\begin{tabular}{c c c c c c}
\hline
$r$                & 4     & 6     & 6     & 8 & 10      \\
$N_{E}$      & 3     & 3   & 3     & 3 & 3      \\
$N_{D}$      & 1     & 1  & 3     & 3  & 3    \\ \hline
$\mathcal{E}$    & \num{2.32e-3} & \num{1.69e-3}  & \num{2.61e-3} & \num{8.32e-4} & \num{6.75e-4} \\ 
\hline
 $\mathcal{E} / \mathcal{E}_\text{pure BCs}$& 2& 2.15& 1.79& 3.61 & 2.65\\
\end{tabular}
}
\label{tab: ResultsFluidExtraBC}
\end{table}

An indicative visualisation of the internal structure of two of the trained SINNs is given in Figure \ref{fig:ExtraBC_contourplots_fluid} for the cases $(r=6,N_E=3,N_D=1)$ and $(r=10,N_E=3,N_D=3)$. The true test function velocity components $u_x,u_y$ are shown in the top row. Both models give a solution to the boundary value problem $\mathcal{F}(\bs{b})$ which is a very good approximation to 
 the true flow, as is shown in the right-hand column of Figure \ref{fig:ExtraBC_contourplots_fluid}. The main noticeable difference is that the SINN model with more latent variables $r=10$ is able to more accurately capture the elongated vertical structure corresponding to values $u_y <0$. Both models have coherent, yet non-trivial, latent variables which are shown in the middle two columns of Figure \ref{fig:ExtraBC_contourplots_fluid}. The additional degrees of freedom enjoyed by the SINN with $r=10$ allows for more accurate reconstruction of the finer scale flow features than the simpler model with $r=6$.

\begin{figure}
\centering
\includegraphics[width=1\textwidth,clip=true,trim=0cm 0cm 0cm 0cm]{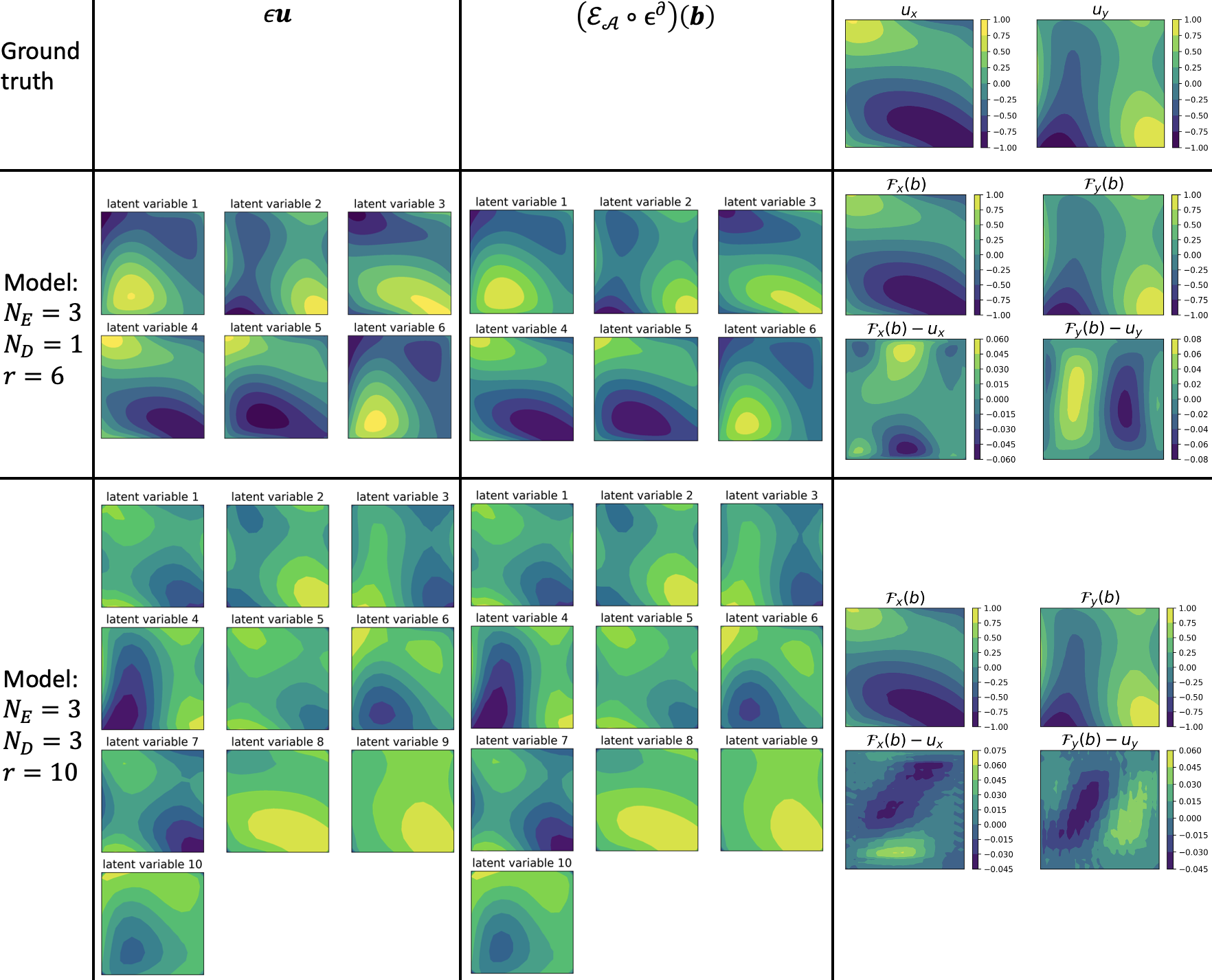}
\caption{An indicative example of SINN performance and latent variables when solving the boundary value problem \eqref{eq:LaminarFLow}. The true solution, $\bs{u}$, is shown in the top-right corner, and its SINN approximations and error fields for the two indicated models shown in the right-hand column. The encoded latent variables $\epsilon \bs{u}$ and solved latent variables $(\mathcal{E}_\mathcal{A} \circ \epsilon^\partial ) (\bs{b})$ are shown in the left-hand and middle columns, respectively. }
\label{fig:ExtraBC_contourplots_fluid}
\end{figure}


To look more closely at SINN performance for boundary observation of the Navier-Stokes PDE \eqref{eq:LaminarFLow}, Figure \ref{fig:ExtraBC_sliceplots_fluidU} shows, for one indicative example, slices through the SINN solutions $\mathcal{F}(\bs{b})$ at $x=0.5$ and $y=0.5$ for all models considered in Table \ref{tab: ResultsFluidExtraBC}. All models exhibit a good approximation to the main trends of the true data, with approximation error decreasing with increasing latent space dimension $r$. For this more challenging example, modelling errors have not yet converged for the parameter values shown. However, as can be observed from the reconstructed flow fields shown in Figures \eqref{fig:ExtraBC_contourplots_fluid}, all SINN models are able to reconstruct a very close approximation to the dominant internal vortex structures in the flow domain. Such performance, if replicated in experimental applications, for example, would be of great practical use.

\begin{figure}
\centering
\includegraphics[width=1\textwidth,clip=true,trim=0cm 0cm 0cm 0cm]{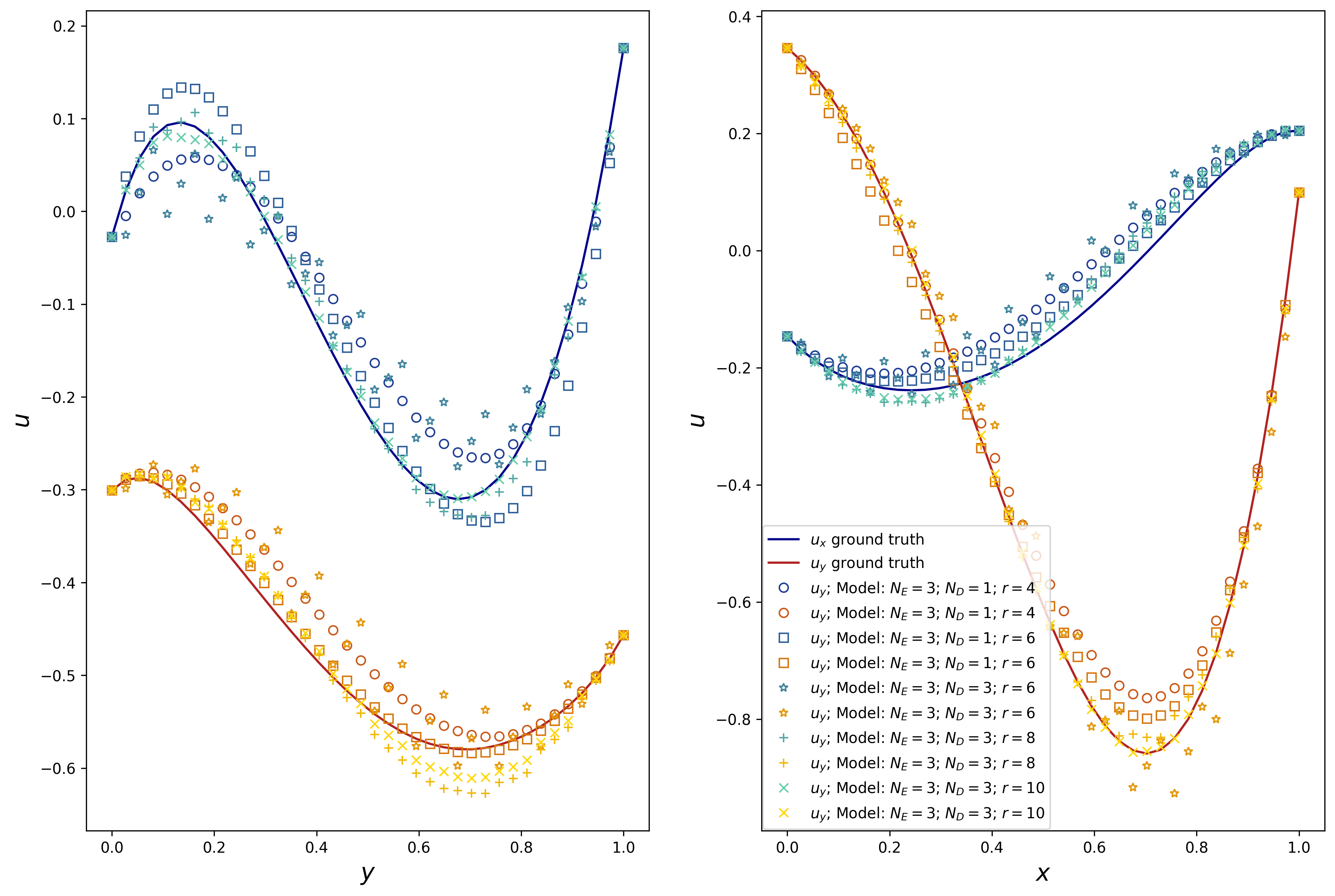}    
\caption{An indicative example of SINN performance for the parameters shown in Table \ref{tab: ResultsFluidExtraBC}. Results are plotted along slides of the domain where $x=0.5$ (left) and $y=0.5$ (right), with the original ``ground truth'' solution to the PDE \eqref{eq:LaminarFLow} also shown.}
\label{fig:ExtraBC_sliceplots_fluidU}
\end{figure}

\

Finally, we perform a latent variable sensitivity analysis for the two trained models with $(r=8,N_E=3,N_D=3)$ and $(r=10,N_E=3,N_D=3)$. The decoder sensitivities $\frac{\partial \delta}{\partial \bs{\ell}_{i}}(\overline{\bs{\ell}})$ each have two components which correspond to the two velocity components in the $x$ and $y$ directions. These sensitivities are shown in Figure \ref{fig:ExtradBC_sensitivity_fluid}. The model with $r=8$ exhibits approximately planar sensitivities about the mean latent variable value $\bar{\bs{\ell}}$, suggesting that the $r=8$ latent variables are being used by the trained model to enable planar perturbations to the solution in four directions for each of the two velocity components. Conversely, the model with $r=10$ clearly exhibits nonlinear, yet spatially coherent, latent-variable sensitivities which appear to enable a more accurate solution to the underlying boundary observation problem. 


\begin{figure}
\centering
\includegraphics[width=0.8\textwidth,clip=true,trim=0cm 0cm 0cm 0.05cm]{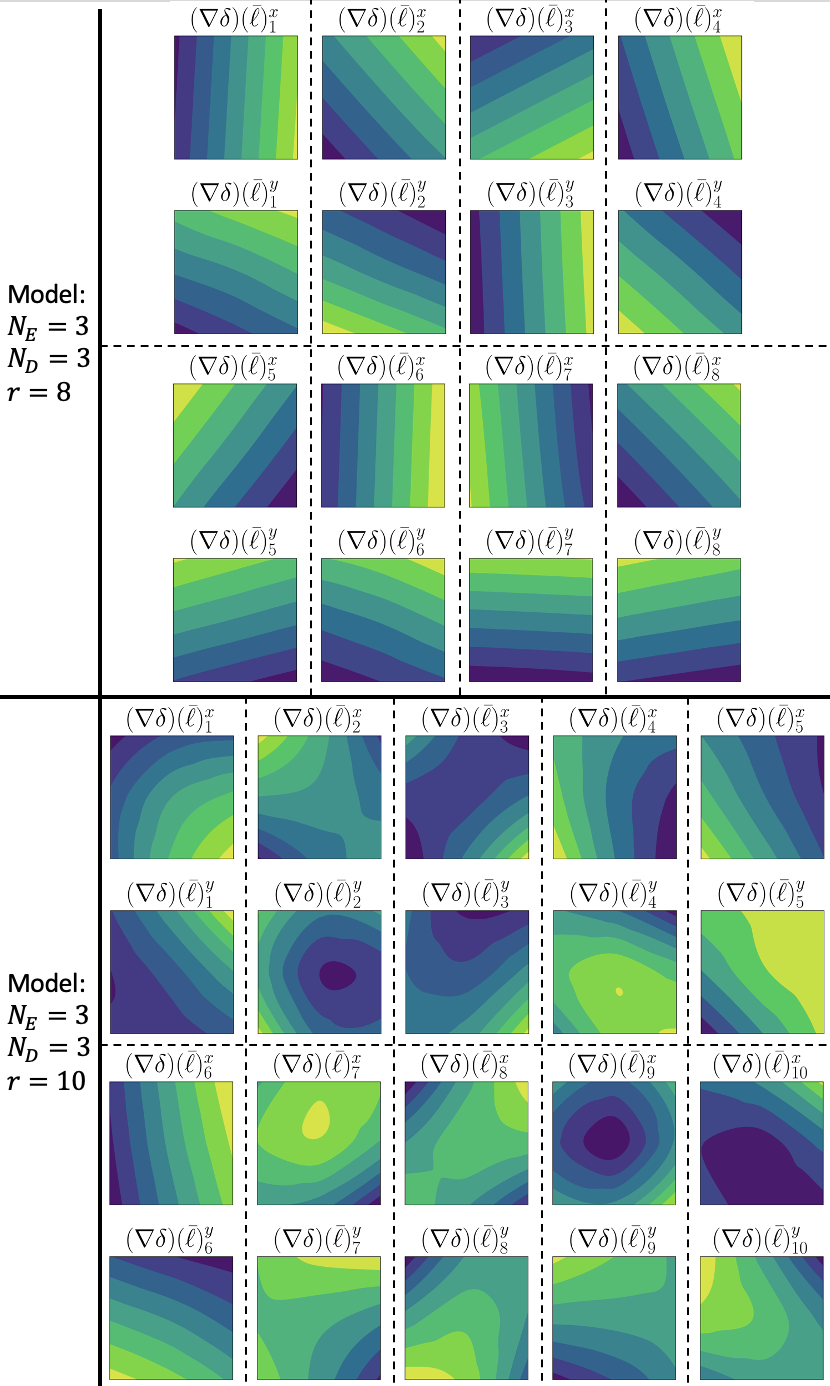}    
\caption{Latent variable sensitivities $(\nabla \delta)(\bar{\bs{\ell}})$ at the mean ensemble latent variable value $\bar{\bs{\ell}}$.}
\label{fig:ExtradBC_sensitivity_fluid}
\end{figure}

\section{Discussion}

The numerical examples discussed in \S\ref{sec:num_egs} suggest that SINNs are able to provide very good approximations to the nonlinear solution operators %
\[
\mathcal{F} :L^2(\partial \Omega, \mathbb{R}^{n_\partial}) \rightarrow L^2(\Omega, \mathbb{R}^n)
\]
for nonlinear boundary observation problems of the form \eqref{eq:PDE_abstract}. The power of the SINN approach is that, via training only finite-dimensional neural networks, it provides nonlinear infinite-dimensional operators $\mathcal{F}$ which can give approximate solutions $\mathcal{F}(\bs{b}) \in L^2(\Omega,\mathbb{R}^n)$ to a boundary observation problem for {\em any given} boundary data function $\bs{b} \in L^2(\partial \Omega,\mathbb{R}^{n_\partial})$. This represents a step-change in utility in comparison to data-driven approaches in which model training, and hence also the trained models, directly depends on a fixed instance of the boundary data. 

From the viewpoint of operator identification, the fact that ensemble errors in the range of $\mathcal{O}(10^{-3})$ to $\mathcal{O}(10^{-5})$ can be obtained by SINNs with very few latent variables ($3 \leq r \leq 10$) indicates that the approach has strong potential to be successfully applied to more complex examples. This is supported by the evidence, discussed in \S\ref{subsec:nl_heat}, that the semi-local structure of the SINN training algorithm endows the approach with significant robustness against over fitting. A further advantage of our data-driven approach is that SINNs can be obtained regardless of whether the available boundary data renders the underlying PDE boundary observation problem over- or under-determined. The data-driven operator $\mathcal{F}$ merely attempts to find an optimal approximation to the PDE solution, given the available training ensemble.  We also emphasise that SINN training does not require knowledge of the underling PDE, meaning that our method can be applied to experimental data and subsequently used to solve unseen boundary conditions.

A natural question is to ask whether the approximation error will converge to zero with increased complexity of the trained SINN operator (e.g. as $r, N_E \rightarrow \infty$, or with the complexity of the underlying neural networks). Since our aim is identify operators which solve nonlinear boundary observation problems which may have no closed-form solutions, and in view of the fact that linear elliptic systems are used as the central non-local building blocks of SINNs, it is unlikely that such convergence will hold in general. However, even without such a property, the numerical evidence presented in this paper suggests that SINNs can provide very good, low-complexity, approximations to nonlinear boundary observation problems which, furthermore, capture key physical features of the solution.

Viewing performance from an approximation accuracy philosophy is not out of line with the motivation for many well-established approaches to the simulation of complex nonlinear  PDEs. For example, in fluid mechanics, if one numerically solves the Reynolds Averaged Navier Stokes (RANS) equations, there is no expectation that the solution will agree with a fully resolved direct numerical simulation (DNS) of the governing Navier-Stokes equations. However, in many practical cases, a RANS solution may provide sufficiently physical insights at a substantially reduced computational cost than DNS. A similar philosophy applies to more accurate, yet still approximate, numerical approaches such as Large Eddy Simulation (LES). From the perspective of creating low-cost SINN models, it should be noted that there is technically no limit to using of significantly larger choice of extrapolation dimension $N_D$ than those used in the numerical examples considered in this paper. Furthermore, even if a SINN is trained using a finely-resolved spatial grid, the fact that an elliptic PDE is identified implies that the SINN operator $\mathcal{F} = \delta \circ \mathcal{E}_\mathcal{A} \circ \epsilon^\partial$ can be implemented using an arbitrary resolution, and potentially low-cost, solution to the central elliptic system $\mathcal{D}_\mathcal{A} \bs{\ell}=0$. 

The boundary encoding and internal decoding can be computationally expensive if large neural networks are used but unlike the latent elliptic system this computation is applied to each section of the domain independently and is trivial to parallelise. Given a discretisation grid of $n_i$ internal points the computational complexity of decoding would scale linearly with $\mathcal{O}(n_i)$ and the boundary encoding would scale even more favourably, since a typical choice of the number of boundary points $n_b$ is lower (e.g., for a 2D domain it may be assumed to scale as $n_b \propto \sqrt{n_i}$). On the other hand the latent elliptic system requires solving a linear system with $n_i \times r$ variables. Computational complexity of this step depends on specific algorithm, with direct methods such as Cholesky decomposition scaling as $\mathcal{O}((n_i r)^3)$. Since the SINN method only provides an approximate solution this precision is not required and so an iterative method could be used which has a smaller per iteration complexity. This is still significantly higher than the encoding or decoding step which, at large enough $n_i$, would dominate the computational cost. In summary, SINNs scale very well for large neural networks and, as shown in the section \S  \ref{subsec:nl_heat}, this can significantly increase modeling accuracy.

Finally, we comment on the computational cost of SINN training. A potential bottleneck is that, for each update to the elliptic system coefficients $\mathcal{A}$, one must repeatably solve a new elliptic system of PDEs on each training patch that is used to build up the cost function $\Psi(\mathcal{U},\mathcal{X})$. The cost of evaluating the cost function can be controlled by using a fixed number of training patch geometries $Q$, and by parallelising the elliptic system solutions on each training patch. To give an example, suppose that each training patch is as shown in Figure \ref{fig:training_run} and requires the solution of an elliptic system at $n_p$ internal points $\bs{p}_i$ in the training patch. Solution of this  elliptic system on the training patch domain involves solving a linear system: 
\[
A(\mathcal{A},\bs{p}_i,\bs{q}_{j})\bs{x} = \bs{b}(\mathcal{A},\bs{p}_i,\bs{q}_{j},\bs{\ell}(\bs{q}_{j}))
\]
where $A \in \mathbb{R}^{(n_p r) \times (n_p r)}$ is a symmetric positive definite matrix which depends linearly on the coefficients of $\mathcal{A}$, on the boundary points $\bs{q}_j$, and on the interior points $\bs{p}_{ij}$. The vector $\bs{b} \in \mathbb{R}^{n_p r}$ depends on $\mathcal{A}$, $\bs{q}_j$, $\bs{p}_{i}$ and the encoded boundary values $\bs{\ell}(\bs{q}_i)$. Solving the above linear system can be performed in two steps: forming the Cholesky decomposition $A=LL^\top$, then using $L$ to solve the linear system via  $\bs{x} = (LL^\top)^{-1} \bs{b}$.  If a common training patch geometry is used, the first step only needs to be computed once per training iterate, with the matrix $L$ stored in memory. 
This computation can be performed in parallel across all training patches required to compute $\Psi$. In a similar manner, any required evaluations of the encoder and decoder can also be parallelised. These steps imply that very efficient training of SINNs is possible.

\section{Conclusions} \label{sec:conculsions}

We have presented a data-driven method for solving boundary observation problems which identifies a solution operator which can approximate the PDE solution for arbitrary boundary data. The constructed models, referred to here as Structure Informed Neural Networks (SINNs), embed an elliptic system into a classical encoder/decoder Neural-Network architecture for reduced-order modelling. The use of elliptic systems, which are well-posed with respect to the global passage of problem data, enables very efficient model training to be performed on small patches of the underlying domain. Numerical evidence suggests that this endows the proposed SINN methodology with significant robustness to over-fitting.

The methodology presented in this paper can be used to solve boundary observation problems which are both time-independent and have boundary data which is known on the entire boundary. Future research will investigate the possibility of extending the SINN methodology to handle cases in which only partial boundary data is available for training or testing, the potential for SINN operators to be embedded in time-dependent algorithms for boundary observation, and the application of the developed methodology to more complex domain geometries.

\section{Appendix}  \label{sec:appendix}

We present the proofs of the regularity results stated in the paper. 

\subsection{Proof of Lemma \ref{lem:cts_int_encoder}} \label{app:int_encoder}

{\em Regularity of $\epsilon \bs{u}$:} Given $\bs{x},\bs{y} \in \Omega_E$, note that 
\begin{equation} \label{eq:eu}
|(\epsilon \bs{u})(\bs{x}) - (\epsilon \bs{u})(\bs{y})| = |e(\bs{u}_{\bs{x}}) - e(\bs{u}_{\bs{y}})|.
\end{equation}
Now, if $\bs{x} \rightarrow \bs{y}$ in $\Omega_E$, then by a standard approximation argument, $\|\bs{u}_{\bs{x}} - \bs{u}_{\bs{y}}\|_{L^2(E)} \rightarrow 0$. It then follows from \eqref{eq:eu} and the assumed continuity of the generating function $e$ that $(\epsilon \bs{u})(\bs{x}) \rightarrow (\epsilon \bs{u})(\bs{y})$, meaning that $(\epsilon \bs{u}) : \Omega_E \rightarrow \mathbb{R}^r$ is continuous. 

To prove uniform boundedness of $\epsilon \bs{u}$, note that for for any $\bs{u} \in L^2(\Omega)$, 
\[
\sup_{x \in \Omega_E} \|\bs{u}_{\bs{x}} \|_{L^2(E)}^2 = \sup_{x \in \Omega_E} \int_E |u(\bs{x}+\bs{y})|^2 d\bs{y} \leq \|\bs{u}\|^2_{L^2(\Omega)}
\]
Since $e : L^2(E) \rightarrow \mathbb{R}^r$ is compact,  it maps bounded subsets of $L^2(E)$ to bounded subsets of $\mathbb{R}^r$. Hence, 
\[
\sup_{x \in \Omega_E} |(\epsilon \bs{u})(\bs{x})| = \sup_{x \in \Omega_E} |e(\bs{u}_{\bs{x}})|_2 < \infty. 
\]
Consequently, $\epsilon \bs{u} \in C(\Omega_E,\mathbb{R}^r)$. \qed

\subsection{Proof of Lemma \ref{lem:decode_cts}} \label{sec:app_decode}

Let $\bs{\ell} \in C(\Omega,\mathbb{R}^r)$ and let $\epsilon >0$. Let $\bs{x},\bs{z} \in \Omega$ and define sets $D_{\bs{x}\bs{y}}:=( D_{\bs{x}} \cap D_{\bs{z}} \cap \Omega)$ and
\[
D_{\bs{x} \setminus \bs{z}} = (D_{\bs{x}} \cap \Omega) \setminus D_{\bs{x}\bs{y}}, \quad D_{\bs{z} \setminus \bs{x}} = (D_{\bs{z}} \cap \Omega) \setminus D_{\bs{x}\bs{y}} 
\]
and set volumes by
\[
c_{\bs{x}} = |D_{\bs{x}} \cap \Omega|, \quad c_{\bs{z}} = |D_{\bs{z}} \cap \Omega|.
\]
For convenience, we also let $f(\cdot):=(\delta \bs{\ell})(\cdot)$ and $g_{\bs{y}}(\cdot):= d(\ell(\bs{y}))(\cdot)$. Then, 
\begin{align*}
|f(\bs{x})-f(\bs{z})| &= \left|\frac{1}{c_{\bs{x}} }\int_{D_{\bs{x}}} g_{\bs{y}}(\bs{x}-\bs{y}) d\bs{y} - \frac{1}{c_{\bs{z}} } \int_{D_{\bs{z}}} g_{\bs{y}}(\bs{z}-\bs{y}) d\bs{y}\right| \\
&\leq \underbrace{\frac{1}{c_{\bs{x}} }\int_{D_{\bs{x}\setminus \bs{z}}} |g_{\bs{y}}(\bs{x}-\bs{y})| d\bs{y} + \frac{1}{c_{\bs{z}} }\int_{D_{\bs{z}\setminus \bs{x}}} |g_{\bs{y}}(\bs{z}-\bs{y})| d\bs{y}}_{:=I_1}\\
&\quad + \underbrace{\frac{1}{c_{\bs{x}} }\int_{D_{\bs{x} \bs{z}}} \left| g_{\bs{y}}(\bs{x}-\bs{y}) - g_{\bs{y}}(\bs{z}-\bs{y}) \right| d\bs{y}}_{:=I_2}\\
&\quad + \underbrace{\left| \frac{1}{c_{\bs{x}}} - \frac{1}{c_{\bs{z}}} \right| \int_{D_{\bs{x}{\bs{z}}}} | g_{\bs{y}}(\bs{z}-\bs{y})| d\bs{y}}_{:=I_3}
\end{align*}
Now, since $\bs{\ell} \in C(\Omega,\mathbb{R}^r)$, it follows that $\bs{\ell}(\Omega) \subset \mathbb{R}^r$ is bounded. Then, using compactness of the encoder generating function  $d$, it follows that $\{ d(\bs{\ell})(\bs{y}) \}_{\bs{y} \in \Omega} = \{ g_{\bs{y}} \}_{\bs{y} \in \Omega}$ is a bounded subset of $C(\Omega,\mathbb{R}^n)$. Hence, there exists $K>0$ such that 
\begin{equation} \label{eq:uni_bnd}
\sup_{\bs{y} \in \Omega} \|g_{\bs{y}}\|_{C(\Omega,\mathbb{R}^n)} \leq K < \infty. 
\end{equation}
Then, since $D_{\bs{x} \setminus \bs{z}},D_{\bs{z} \setminus \bs{x}} \rightarrow 0$ and $c_{\bs{x}} - c_{\bs{z}} \rightarrow 0$ as $\bs{x} \rightarrow \bs{z}$, it follows that there exists $\delta_1 >0$ such that  
\[
I_1 + I_3 \leq K \left( |D_{\bs{x} \setminus \bs{z}}| + |D_{\bs{z} \setminus \bs{x}}| + |D_{\bs{x}\bs{z}}|\left| \frac{1}{c_{\bs{x}}} - \frac{1}{c_{\bs{z}}} \right| \right) < \frac{\epsilon}{2}. 
\]
whenever $|\bs{x}-\bs{y}| < \delta_1$. 

Finally, since $d$ is continuous and $\bs{\ell}(\bar{\Omega}) \subset \mathbb{R}^d$ is compact, it follows that $\mathcal{F}:=\{ g_{\bs{y}} \}_{\bs{y} \in \bar{\Omega}}$ is a compact subset of $C(\Omega,\mathbb{R}^n)$. Consequently, the set of functions $\mathcal{F}$ is equicontinuous and, hence, there exists $\delta_2>0$ such that $|\bs{g}_y(\bs{x}-\bs{y}) - g_{\bs{y}}(\bs{z}-\bs{y})| < \epsilon/2$, for any $\bs{y} \in \Omega$, whenever $|\bs{x}-\bs{z}| <\delta_2$. Hence, 
\[
|f(\bs{x})-f(\bs{z})| \leq I_1 + I_2+I_3 \leq \epsilon
\]
whenever $|\bs{x}-\bs{z}| <\min\{\delta_1,\delta_2\}$, meaning that $f = \delta \bs{\ell}$ is continuous. That $\delta\bs{\ell} \in C(\Omega,\mathbb{R}^n)$ then follows from the upper bound \eqref{eq:uni_bnd}.
\qed

\bibliography{SINN_references}

\end{document}